\documentclass[manuscript]{acmart}
\usepackage{graphicx}
\usepackage{textcomp}
\usepackage{comment}

\usepackage{filecontents}

\newtheorem{theorem}{Theorem}
\newtheorem{lemma}{Lemma}
\newtheorem{corollary}{Corollary}
\newtheorem{definition}{Definition}
\newtheorem{proposition}{Proposition}
\usepackage{algorithmic}
\usepackage{braket}
\usepackage{booktabs, array, tabularx, makecell}
\usepackage{xcolor}
\usepackage[T1]{fontenc}
\usepackage{enumerate}
\usepackage{enumitem}
\usepackage{subcaption}
\usepackage{units}
\usepackage{multirow}
\usepackage[normalem]{ulem}
\usepackage{mathtools}
\newcolumntype{P}[1]{>{\centering\arraybackslash}p{#1}}
\newcolumntype{M}[1]{>{\centering\arraybackslash}m{#1}} 

\AtBeginDocument{%
  }

\begin{document}

\title{A Risk-Aware Framework for Covert Quantum Communication under Stochastic Channel Uncertainty}

\author{Abbas Arghavani}
\email{abbas.arghavani@mdu.se}
\orcid{0000-0001-6581-2251}
\affiliation{%
  \institution{Department of Computer Science and Engineering, M{\"a}lardalen University}
  \city{V{\"a}ster{\aa}s}
 \country{Sweden}
}

\author{Shahid Raza}
\affiliation{%
  \institution{School of Computing Science, University of Glasgow}
  \city{Glasgow}
  \country{Scotland, UK}}
\email{shahid.raza@glasgow.ac.uk}

\author{Maryam Amiri}
\affiliation{%
  \institution{Department of Physics, University of Otago}
  \city{Dunedin}
  \country{New Zealand}
}

\author{Alessandro Papadopoulos}
\affiliation{%
 \institution{Department of Computer Science and Engineering, M{\"a}lardalen University}
 \country{Sweden}}

\renewcommand{\shortauthors}{Arghavani et al.}

\begin{abstract}

Covert quantum communication (CQC) enables privacy-preserving information transmission by concealing not only message content but also the existence of communication. Existing CQC formulations typically assume deterministic or worst-case channel conditions, an assumption that is difficult to justify in realistic free-space optical and quantum communication environments affected by turbulence, background radiance fluctuations, and stochastic detector noise. This paper introduces a stochastic risk-aware optimization framework for CQC under uncertain physical-layer conditions. By modeling transmissivity and background noise as random variables, we formulate covertness and reliability guarantees probabilistically through chance-constrained optimization, with explicit outage budgets 
$\epsilon_{\text{cov}}$ and $\epsilon_{\text{rel}}$. This reframes CQC design as a risk-calibrated resource allocation problem balancing throughput, covertness, reliability, and communication privacy.
We derive tractable quantile-based reformulations for covertness and reliability outage constraints and characterize feasible operating regions under stochastic uncertainty. We also introduce a complementary risk-adjusted utility formulation to expose trade-offs between throughput maximization and probabilistic security or reliability violations. The analysis reveals a key operational transition: modest relaxations in acceptable covertness outage risk $\epsilon_{\text{cov}}$ can yield substantial gains in covert throughput, while aggressive throughput optimization can destabilize covertness outside sparse-transmission regimes.
Our Monte Carlo evaluation under log-normal fading and stochastic thermal noise demonstrates that the framework significantly enlarges feasible operating regions, improves covert throughput by more than an order of magnitude, and identifies critical degradation boundaries beyond which covert operation becomes unreliable.
By combining stochastic optimization with probabilistic security and privacy guarantees, this work advances CQC toward operationally realistic secure quantum networking for free-space, satellite, and low-probability-of-detection communications.

\end{abstract}

\begin{CCSXML}
<ccs2012>
   <concept>
       <concept_id>10002978.10002986.10002989</concept_id>
       <concept_desc>Security and privacy~Formal security models</concept_desc>
       <concept_significance>300</concept_significance>
       </concept>
   <concept>
       <concept_id>10002950.10003648.10003700</concept_id>
       <concept_desc>Mathematics of computing~Stochastic processes</concept_desc>
       <concept_significance>300</concept_significance>
       </concept>
   <concept>
       <concept_id>10002978.10002979.10002984</concept_id>
       <concept_desc>Security and privacy~Information-theoretic techniques</concept_desc>
       <concept_significance>500</concept_significance>
       </concept>
 </ccs2012>
\end{CCSXML}

\ccsdesc[300]{Security and privacy~Formal security models}
\ccsdesc[300]{Mathematics of computing~Stochastic processes}
\ccsdesc[500]{Security and privacy~Information-theoretic techniques}

\keywords{covert quantum communication, risk-aware design, stochastic channel uncertainty, optical channels, quantum security, probabilistic guarantees, outage-constrained optimization}


\maketitle

\section{Introduction}
\label{sec:introduction}

Modern cryptography protects the \emph{content} of communications, but not their \emph{existence}. In many high-stakes settings, merely detecting a signal can trigger censorship, jamming, or physical intervention, even if the message remains undeciphered. This motivates \emph{covert communication} (also known as low-probability-of-detection, LPD), whose goal is to hide that communication is happening at all.

Consider a canonical scenario: A legitimate transmitter (Alice) wishes to communicate with a legitimate receiver (Bob) while an adversarial warden (Willie) attempts to detect whether any transmission is taking place. A transmission is deemed \emph{covert} if Willie's optimal detector cannot perform significantly better than random guessing. Figure~\ref{fig:simple_model} depicts this setting in the optical domain.

\begin{figure}[!t]
    \centering
    \includegraphics[width=0.5\linewidth]{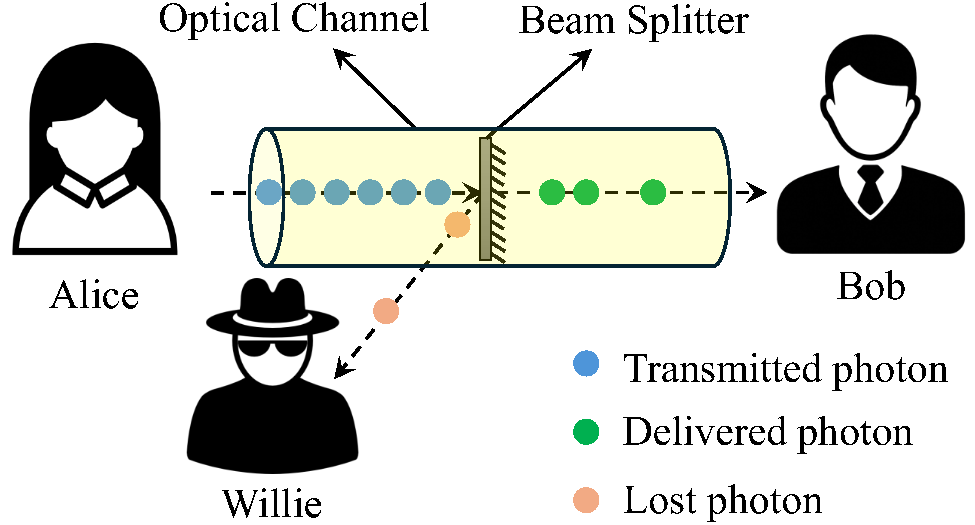}
    \caption{Schematic of CQC. Alice transmits photons to Bob over a channel with transmittance $\eta$. An adversary, Willie, collects the lost photons $(1-\eta)$ to detect the transmission, whose signal Alice must hide within background noise.}
    \label{fig:simple_model}
\end{figure}

Covert communication has been extensively studied in classical (non-quantum) wireless, wired, and optical systems. A foundational result, known as the \emph{Square-Root Law (SRL)}, states that over $n$ channel uses only $\mathcal{O}(\sqrt{n})$ bits can be sent covertly~\cite{bash2013limits, wang2016fundamental}. This limit has driven research on signaling, coding, and system design under strict undetectability constraints~\cite{arghavani2023covert}. In parallel, several works show that injecting or exploiting \emph{uncertainty} at the warden, e.g., via friendly jamming or fluctuating interference, can relax SRL assumptions and, under suitable models of channel/noise uncertainty, enable substantially higher covert throughput than the classical SRL would suggest~\cite{sobers2017covert, soltani2018covert, arghavani2023covert, arghavani2021game}.

Recent advances have extended covert-communication principles into the \emph{quantum} domain~\cite{anderson2024covert, anderson2024square, anderson2025achievability, tahmasbi2021signaling}. In covert quantum communication (CQC), Alice sends quantum states (typically single photons or weak coherent states) over a lossy optical channel to Bob, while Willie attempts to detect the mere existence of communication. Communication is considered covert if Willie cannot reliably infer whether a transmission is taking place. As sketched in Figure~\ref{fig:simple_model}, a fraction of the optical energy inevitably leaks into the environment and can be collected by Willie for detection; the physical and threat models are shown in Figure~\ref{fig:channel} and detailed in Section~\ref{sec:model}. Beyond its theoretical appeal, CQC enables emerging cybersecurity applications, including
\begin{itemize}
    \item \textbf{Satellite-to-ground links:} Undetectable command-and-control of space assets during sensitive missions.
    \item \textbf{Free-space optical (FSO) urban networks:} Invisible, secure backhaul in contested environments.
    \item \textbf{Quantum key distribution (QKD) integration:} Concealing the existence of the QKD session itself, not only the key.
    \item \textbf{Privacy-sensitive finance and tactical command and control:} Covert signaling for secure financial transactions and undetectable command-and-control in tactical settings.
\end{itemize}

Most existing CQC analyses assume that the legitimate parties know the channel's physical parameters, such as the transmittance and mean thermal photon number, \emph{exactly}, and that these parameters remain constant over time~\cite{anderson2024covert, anderson2024square, anderson2025achievability}. In practice, this assumption rarely holds. Real-world channels, particularly in FSO and satellite links, are subject to dynamic and unpredictable variations caused by atmospheric turbulence, pointing errors, and fluctuating background illumination. Such variability can significantly degrade performance or, worse, compromise covertness.

From a privacy-and-security perspective, the quantity at stake is not only throughput but the \emph{probability that the very existence of communication is exposed}. Under channel uncertainty, a design calibrated for nominal conditions can violate its intended covertness target or decoding guarantee when the environment deviates from that nominal model. The problem is therefore fundamentally one of \emph{security-risk calibration under uncertainty}: the system designer must choose an operating point that controls the probabilities of exposure and communication failure, not merely maximize rate under idealized assumptions.

A recent line of work has begun to address this issue through \emph{bounded uncertainty}, in which the channel parameters are assumed to lie within known deterministic intervals~\cite{arghavani2026robust}. That approach provides worst-case guarantees, but it is inherently conservative: it optimizes against the most adverse channel realizations even when such realizations may be rare. The present paper moves beyond that robust worst-case perspective by modeling the uncertainty stochastically and optimizing throughput under explicit probabilistic risk budgets.


In this paper, we take the next step by modeling channel uncertainty \emph{stochastically}. We treat key channel parameters (transmittance and thermal noise) as random variables drawn from physically motivated distributions (e.g., a truncated lognormal model for turbulence-induced loss, and a truncated Gaussian model for background noise). This shift enables a move from optimistic constant channel guarantees to a \emph{risk-aware} design paradigm in which system performance is optimized while explicitly bounding the probability of two failure events:
\begin{itemize}
    \item \textbf{Covertness failure}: The adversary's detection advantage exceeds the allowable security threshold, so the communication event is insufficiently hidden.
    \item \textbf{Decoding failure}: The channel is too degraded for the receiver to decode the message reliably.
\end{itemize}
%
To the best of our knowledge, prior CQC analyses have not jointly modeled physical-layer uncertainty through explicit random-variable channel models, imposed explicit probabilistic constraints on both covertness and reliability outages, and optimized throughput under these quantified risks. Our framework is intended to fill that gap under the adopted bosonic-channel model. The main contributions of this work are summarized as follows:
\begin{enumerate}[label=(\arabic*)]
    \item Develop a systematic framework for \textbf{risk-aware covert quantum communication} over channels with stochastic uncertainty, based on explicit probabilistic models for channel variation and outage risk.
    
    \item Formulate the \textbf{primary} design problem as a risk-constrained optimization that maximizes covert throughput under explicit probabilistic budgets on covertness and reliability outages. Theorem~\ref{thm:optimal_throughput} gives the optimal solution, and Corollary~\ref{cor:monotonicity} establishes monotonicity of the resulting Pareto frontier. As a \textbf{secondary exploratory extension}, we also study a risk-adjusted objective; Theorem~\ref{thm:foc_risk_adjusted} derives first-order conditions for any differentiable interior optimum, which we use only to interpret operating-point preferences rather than to claim hard covert guarantees.
    
    \item Show how stochastic physical-layer uncertainty can be reduced to calibrated covertness- and reliability-outage variables through the induced distributions of \(c_{\text{cov}}(\eta,\overline{n}_B)\) and \(R_{\text{ach}}(\eta,\overline{n}_B)\), and provide a practical, simulation-based methodology (using quantile reformulation and Monte Carlo estimation) to solve the resulting design problem and characterize the \textbf{risk–performance trade-off surface} and constrained Pareto behavior of CQC systems under the adopted stochastic model.
    
    \item Derive novel \textbf{analytical results}, including a closed-form solution for the optimal strategy in a benchmark exponential-noise channel (Lemma~1) and a formal analysis of risk sensitivity (Proposition~1) that identifies performance bottlenecks.    
    
    \item Demonstrate, through physically motivated FSO-inspired stochastic simulations, the usefulness of the framework and its ability to balance throughput and risk under realistic stochastic uncertainty.
\end{enumerate}

\noindent The remainder of this paper is organized as follows. Section~\ref{sec:related} reviews related work in covert and covert quantum communication. Section~\ref{sec:model} presents the system and threat models together with the baseline performance metrics. Section~\ref{sec:channel-models} introduces the stochastic channel model, operational assumptions, and risk definitions. Section~\ref{sec:problem} formulates the design problem, and Section~\ref{sec:solution} details the solution methodology. Section~\ref{sec:results} provides simulation results, followed by discussion in Section~\ref{sec:discussion} and conclusions in Section~\ref{sec:conclusion}.

\begin{table}[t]
\small
\centering
\caption{Summary of key symbols used in the paper.}
\label{tab:symbols}
\begin{tabularx}{\columnwidth}{@{}l>{\raggedright\arraybackslash}X@{}}
\toprule
\textbf{Symbol} & \textbf{Description} \\
\midrule
\(n\) & Number of channel uses \\
\(\eta\) & Channel transmittance (random variable) \\
\(\mu_{\ln(\eta)}, \sigma_{\ln(\eta)}\) & Mean and standard deviation of \(\ln(\eta)\) \\
\(\mu_\eta, \sigma_\eta\) & Mean and standard deviation of \(\eta\) \\
\(\overline{n}_B\) & Mean thermal photon number (random variable) \\
\(\mu_{n_B}, \sigma_{n_B}\) & Mean and standard deviation parameters of the \(\overline{n}_B\) distribution \\
\(\delta\) & Covertness threshold \\
\(c_{\text{cov}}\) & Covertness constant of the channel \\
\(D(\cdot\|\cdot)\) & Quantum relative entropy (QRE) \\
\(q, q^*\) & Transmission probability; its optimized value under stochastic uncertainty \\
\(R_{\text{ach}}, R^*\) & Instantaneous achievable rate under a realized channel; optimized chosen code rate \\
\(T, T^*\) & Per-use covert throughput \(T = qR\); optimal throughput \(T^* = q^*R^*\) \\
\bottomrule
\end{tabularx}
\end{table}

For clarity, we summarize the key notation used throughout the paper. 
The physical channel is characterized by its transmittance $\eta$ and mean thermal photon number $\overline{n}_B$, both modeled as random variables. 
For turbulence–induced loss, we use either log-domain parameters $(\mu_{\ln(\eta)}, \sigma_{\ln(\eta)})$ for a lognormal model, or linear-domain parameters $(\mu_\eta, \sigma_\eta)$ when modeling $\eta$ directly. 
For the background noise, we use $(\mu_{n_B}, \sigma_{n_B})$. 
The fundamental covert-communication limit is captured by the covertness constant \(c_{\text{cov}}\). 
Alice's transmission strategy is specified by the transmission probability \(q\) and the chosen code rate \(R\), with optimizer outputs \(q^*\) and \(R^*\). The instantaneous achievable rate supported by a realized channel is denoted by \(R_{\text{ach}}(\eta,\overline{n}_B)\). Our primary performance metric is the per-use covert throughput \(T = qR\); the optimizer returns \(T^* = q^*R^*\). 
Table~\ref{tab:symbols} summarizes the notation.

\section{Related Work}
\label{sec:related}

Covert communication has been extensively studied in non-quantum communication systems. The foundational works~\cite{bash2013limits, wang2016fundamental} established the \emph{SRL}, which states that over $n$ channel uses, the number of reliably and covertly transmissible bits scales at most as $\mathcal{O}(\sqrt{n})$~\cite{bash2013limits}. This limitation has motivated a large body of research on coding, modulation, and signaling strategies that maximize throughput under strict undetectability constraints, with applications spanning military links, privacy-preserving networking in civilian infrastructure, and in-body covert sensing/communication~\cite{padmal2025fat}.

More recently, attention has shifted to quantum channels, where the medium conveys quantum states of light rather than classical symbols. CQC leverages quantum effects (e.g., photon-counting statistics and bosonic noise models) to guarantee undetectability against an optimal quantum-limited adversary~\cite{anderson2024square, anderson2024covert, anderson2025achievability}. These works derive performance bounds for pure-loss, thermal-noise, and entanglement-assisted channels. While the $\mathcal{O}(\sqrt{n})$ scaling persists, the constants and operational regimes differ, opening new avenues for stronger security guarantees. However, most CQC analyses assume perfectly known and time-invariant channel parameters (typically the transmittance and mean thermal photon number). This assumption is rarely satisfied in practice (especially in free-space optical (FSO) and satellite links) where atmospheric turbulence, pointing error, and background illumination vary on short timescales.

In the classical literature on LPD, stochastic channel uncertainty has also been addressed by modeling fading/noise as random processes and optimizing average-case or outage-constrained performance~\cite{sobers2017covert, soltani2018covert, arghavani2023covert}. To the best of our knowledge, prior work in the \emph{quantum} setting has not developed a systematic CQC design framework under \emph{stochastic uncertainty} that explicitly couples distributed channel uncertainty with probabilistic constraints on both covertness failure and decoding failure. This gap is security-critical: in covert communication, uncertainty does not merely reduce performance, but can invalidate the intended protection of the communication event itself. Related classical robust-design work under bounded uncertainty has also shown that reliability and covertness need not be governed by the same adverse parameter realization, reinforcing the need to treat the two constraints separately in robust covert design~\cite{arghavani2026conflict}. Building on quantum covertness limits and our earlier bounded-uncertainty CQC model~\cite{arghavani2026robust}, we therefore propose a risk-aware CQC design methodology for stochastic channels, thereby bridging pessimistic worst-case designs, idealized perfect-knowledge models, and deployment-relevant security-risk calibration.

\section{System Model and Performance Metrics}
\label{sec:model}

\begin{figure*}[!t]
    \centering
    \includegraphics[width=0.7\linewidth]{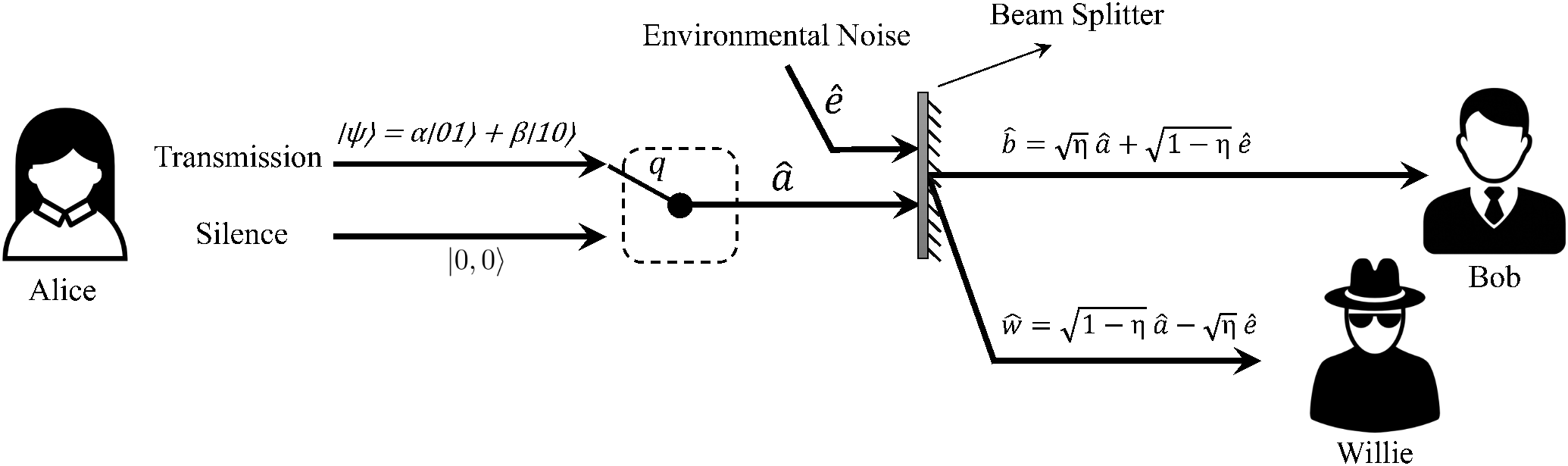}
    \caption{The CQC channel model, where Alice transmits a signal state ($\ket{\psi}$) with probability $q$ or vacuum with probability $1-q$. A beamsplitter with transmittance $\eta$ mixes her input $\hat{a}$ with environmental noise $\hat{e}$, directing outputs to Bob ($\hat{b}$) and an adversary, Willie ($\hat{w}$).
    }
    \label{fig:channel}
\end{figure*}

To analyze the impact of channel uncertainty, we first outline the system model and associated performance measures, originally introduced in~\cite{anderson2024covert, anderson2024square}. This model serves as the baseline for our stochastic risk-aware formulation and is reproduced here to make the paper self-contained.

\subsection{Physical Setup and Channel Model}
We consider a CQC scenario with three parties: a sender (Alice), a legitimate receiver (Bob), and a passive adversary (Willie). Alice’s goal is to deliver quantum information to Bob while preventing Willie from reliably detecting that communication is occurring. Time is organized into frames; in each frame, the optical channel is used \(n\) times. We refer to each channel use as a \emph{slot}. In slot \(t\in\{1,\ldots,n\}\), Alice transmits a signal state with probability \(q\), and remains silent (sending the vacuum state \(\ket{00}\)) with probability \(1-q\). The signal is a single qubit encoded using the \emph{dual-rail scheme}, a standard technique in linear optical quantum computation~\cite{anderson2024covert}. In this encoding, a logical qubit is represented by a single photon distributed across two optical modes. The logical basis states are
$\ket{0}_L = \ket{01}, \quad \ket{1}_L = \ket{10}$,
so a general logical qubit can be written as $\ket{\psi} = \alpha\ket{0}_L + \beta\ket{1}_L = \alpha\ket{01} + \beta\ket{10}$,
with normalization $|\alpha|^2 + |\beta|^2 = 1$.

The communication channel is modeled as a \emph{lossy bosonic channel with thermal noise}, appropriate for both fiber-optic and free-space links. As illustrated in Figure~\ref{fig:channel}, the channel is implemented by a beamsplitter with two input and two output modes. Alice’s input mode $\hat{a}$ is mixed with an environmental thermal mode $\hat{e}$ of mean photon number $\overline{n}_B$. One output mode, $\hat{b}$, is directed to Bob; the other, $\hat{w}$, is accessible to Willie. The corresponding annihilation operators satisfy
\begin{align}
    \hat{b} &= \sqrt{\eta}\,\hat{a} + \sqrt{1-\eta}\,\hat{e}, \\
    \hat{w} &= \sqrt{1-\eta}\,\hat{a} - \sqrt{\eta}\,\hat{e},
\end{align}
where $\eta \in (0,1)$ is the transmittance, i.e., the fraction of Alice’s signal that reaches Bob. The complementary fraction $1-\eta$ is diverted to Willie. The environmental input $\hat{e}$ is assumed to be a zero-mean thermal state $\hat{\rho}_{\overline{n}_B}$ with mean photon number $\overline{n}_B$. In the Fock basis,
\[
\hat{\rho}_{\overline{n}_B} = \sum_{k=0}^\infty \frac{(\overline{n}_B)^k}{(1+\overline{n}_B)^{k+1}} \ket{k}\bra{k}.
\]
This model captures the fundamental interaction between signal loss and environmental noise that governs Bob’s decoding performance and Willie’s detection capability.

\subsection{Covertness Requirement and Security Analysis}

From Willie's standpoint, detection is a binary quantum hypothesis test between:
\begin{itemize}
    \item \textbf{Hypothesis $H_0$:} Alice remains silent throughout the entire frame. Willie observes the frame-level state
    \[
    \hat{\rho}_{0,\text{frame}}^W = \big(\hat{\rho}_{0,\text{slot}}^W\big)^{\otimes n},
    \]
    where $\hat{\rho}_{0,\text{slot}}^W$ is the noise state per slot (i.e., per channel use).

    \item \textbf{Hypothesis $H_1$:} Alice transmits according to her probabilistic scheme. Willie observes the corresponding frame-level state $\hat{\rho}_{1,\text{frame}}^W$.
\end{itemize}

Willie’s optimal detector incurs \emph{false alarms} (type I error) with probability $P_{\text{FA}}$ and \emph{missed detections} (type II error) with probability $P_{\text{MD}}$. Under equal priors $P(H_0)=P(H_1)=1/2$, the average error probability is
\[
P_e = \tfrac{1}{2}\big(P_{\text{FA}} + P_{\text{MD}}\big).
\]
The covertness criterion~\cite{arghavani2023covert} requires that Willie's minimum error probability be close to random guessing:
\begin{equation}\label{eq:covertness_1}
P_{e,\min} \ge \tfrac{1}{2} - \delta,
\end{equation}
for some small $\delta > 0$. The Helstrom bound~\cite{helstrom1976quantum} relates $P_{e,\min}$ to the trace distance between the two possible global states:
\[
P_{e,\min} = \tfrac{1}{2}\left(1 - \tfrac{1}{2}\,\big\| \hat{\rho}_{1,\text{frame}}^W - \hat{\rho}_{0,\text{frame}}^W \big\|_1 \right).
\]
Combining with \eqref{eq:covertness_1} yields the trace-distance bound
\begin{equation}
\label{eq:trace_dist_bound}
\big\| \hat{\rho}_{1,\text{frame}}^W - \hat{\rho}_{0,\text{frame}}^W \big\|_1 \le 4\delta.
\end{equation}

By the quantum Pinsker inequality~\cite{anderson2024square, wilde2017quantum},
\[
\tfrac{1}{2}\big\|\rho - \sigma\big\|_1 \;\le\; \sqrt{\tfrac{1}{2}\,D(\rho\|\sigma)},
\]
so \eqref{eq:trace_dist_bound} is satisfied if the quantum relative entropy (QRE) obeys
\[
D\!\left(\hat{\rho}_{1,\text{frame}}^W \,\big\|\, \hat{\rho}_{0,\text{frame}}^W\right) \le 8\delta^2.
\]
Under the transmission model adopted here, and \emph{conditional on a realized frame-level channel state} $(\eta,\overline{n}_B)$, Willie’s observations across slots are modeled as identically distributed and independent under each hypothesis. Accordingly, for a fixed frame state, the frame-level states are
\[
\hat{\rho}_{1,\text{frame}}^W(\eta,\overline{n}_B)
=
\big(\hat{\rho}_{1,\text{slot}}^W(\eta,\overline{n}_B)\big)^{\otimes n},
\qquad
\hat{\rho}_{0,\text{frame}}^W(\eta,\overline{n}_B)
=
\big(\hat{\rho}_{0,\text{slot}}^W(\eta,\overline{n}_B)\big)^{\otimes n},
\]
and the QRE reduces to
\[
D\!\left(
\hat{\rho}_{1,\text{frame}}^W(\eta,\overline{n}_B)
\,\big\|\,
\hat{\rho}_{0,\text{frame}}^W(\eta,\overline{n}_B)
\right)
=
n\,D\!\left(
\hat{\rho}_{1,\text{slot}}^W(\eta,\overline{n}_B)
\,\big\|\,
\hat{\rho}_{0,\text{slot}}^W(\eta,\overline{n}_B)
\right).
\]
This is the standard product-state reduction for a quasi-static frame with conditional i.i.d.\ slots, rather than a separate claim about the channel being entanglement-breaking. Throughout the stochastic outage analysis developed later, we interpret this conditional-on-state formulation under the conservative threat model that Willie’s detection capability for a given frame is evaluated with respect to the realized frame state \((\eta,\overline{n}_B)\), and the resulting outage probability is then computed over the distribution of such frame states across frames/sessions. In Section~\ref{sec:channel-models}, we then place a probability law on the frame-level state $(\eta,\overline{n}_B)$ and study outage probabilities across frames/sessions. When the transmission probability $q$ is small, the per-slot state under $H_1$, $\hat{\rho}_{1,\text{slot}}^W$, is close to $\hat{\rho}_{0,\text{slot}}^W$. In this sparse-transmission regime, the per-slot QRE can be approximated using the quantum $\chi^2$-divergence, yielding
\[
D\!\left(\hat{\rho}_{1,\text{slot}}^W \,\big\|\, \hat{\rho}_{0,\text{slot}}^W\right) \lesssim
q^2 \, \frac{(1-\eta)^2}{\eta \overline{n}_B \,\big(1+\eta \overline{n}_B\big)}.
\]
Substituting into the QRE bound gives the covertness condition
\[
n \, q^2 \, \frac{(1-\eta)^2}{\eta \overline{n}_B \,\big(1+\eta \overline{n}_B\big)} \;\lesssim\; 8\delta^2.
\]
Defining the \emph{covertness constant}
\begin{equation}
\label{eq:ccov}
c_{\text{cov}} \;=\; \frac{\sqrt{2\,\eta \,\overline{n}_B \,\big(1+\eta \overline{n}_B\big)}}{1-\eta},
\end{equation}
the constraint on the transmission probability becomes
\[
q \;\le\; \frac{2\delta\,c_{\text{cov}}}{\sqrt{n}}.
\]
A larger $c_{\text{cov}}$ permits a higher feasible $q$ (and thus greater throughput) without compromising covertness.

\subsection{Reliability Requirement}

When Alice transmits, Bob must be able to recover the encoded qubit from his noisy measurements reliably. Bob's decoding performance is affected by two primary factors:
\begin{enumerate}
    \item \emph{Channel-induced errors}: photon loss, governed by $\eta$, and thermal noise, characterized by $\overline{n}_B$;
    \item \emph{Projection failure}: decoding via projection onto the dual-rail computational subspace $\{\ket{01}, \ket{10}\}$ can discard the photon even if it arrives.
\end{enumerate}

Following~\cite{anderson2024square}, the cumulative impact of these effects is well modeled as a depolarizing channel acting on the input qubit:
\[
\mathcal{E}(\rho_{\text{in}}) \;=\; (1-p)\,\rho_{\text{in}} \;+\; p\,\frac{I}{2},
\]
where $\rho_{\text{in}}$ is the qubit sent by Alice, and the depolarizing probability $p$ is
\begin{equation}
\label{eq:p}
p \;=\; 1 \;-\; \frac{\eta}{\big[\,1+(1-\eta)\overline{n}_B\,\big]^4}.
\end{equation}
The corresponding Pauli error probabilities are
\[
\vec{p} \;=\; \big[ p_I, p_X, p_Y, p_Z \big] \;=\; \Big[ 1-\tfrac{3p}{4}, \tfrac{p}{4}, \tfrac{p}{4}, \tfrac{p}{4} \Big].
\]
The achievable communication rate per transmitted qubit, from the hashing bound~\cite{anderson2024square}, is
\begin{equation}
\label{eq:comm_rate}
R_{\text{ach}}(\eta,\overline n_B) \;=\; \big(1-H(\vec p)\big)^+,
\end{equation}
where $H(\vec{p})=-\sum_i p_i\log_2 p_i$ is the Shannon entropy of the error distribution (with $i \in \{I,X,Y,Z\}$) and $(x)^+=\max(x,0)$. Expanding $H(\vec{p})$ gives
\[
H(\vec{p}) \;=\; -\Big(1 - \tfrac{3p}{4}\Big) \log_2\!\Big(1 - \tfrac{3p}{4}\Big)
\;-\; 3 \cdot \tfrac{p}{4}\,\log_2\!\Big(\tfrac{p}{4}\Big).
\]

Combining the instantaneous achievable rate in \eqref{eq:comm_rate} with the covertness-induced constraint on $q$ characterizes the secure operating region of the system. For convenience, we also define the \emph{per-use covert throughput}
\begin{equation}
\label{eq:throughput}
T \;\triangleq\; q\,R,
\end{equation}
which we will optimize under stochastic channel uncertainty in subsequent sections.

\section{Channel 
Models and Assumptions}
\label{sec:channel-models}

We now formalize the statistical models, knowledge assumptions, and risk constraints that underpin our analysis. Our framework moves from fixed channel parameters to a physically faithful \emph{stochastic} model that enables risk-aware performance optimization.

\subsection{Stochastic Channel Parameters}

In Section~\ref{sec:model}, we treated $\eta$ and $\overline{n}_B$ as known constants. In realistic deployments, these quantities vary due to environmental dynamics. We therefore model them as random variables with known probability distributions, obtained via prior measurement or physical modeling.

\begin{itemize}
    \item \textbf{Transmittance $\eta$:} For FSO and satellite links, the dominant fluctuation mechanism is atmospheric turbulence, which induces scintillation (fading). Because the optical transmittance in our bosonic channel model must satisfy $0<\eta\le 1$, we model $\eta$ using a \emph{truncated lognormal} distribution on $(0,1]$:
    \[
    \eta \sim \mathrm{Lognormal}_{(0,1]}\!\left(\mu_{\ln(\eta)},\,\sigma^2_{\ln(\eta)}\right),
    \]
    with density
    \[
    f_\eta(x)
    =
    \frac{f_{\mathrm{LN}}(x;\mu_{\ln(\eta)},\sigma_{\ln(\eta)})}
    {F_{\mathrm{LN}}(1;\mu_{\ln(\eta)},\sigma_{\ln(\eta)})}
    \,\mathbf{1}_{(0,1]}(x),
    \]
    where $f_{\mathrm{LN}}$ and $F_{\mathrm{LN}}$ are the PDF and CDF of the corresponding untruncated lognormal law. This preserves the usual lognormal motivation from wave-propagation theory while enforcing the physical support required by the channel model and the covertness expression~\cite{andrews2023laser, khalighi2014survey}.
    
    \item \textbf{Thermal noise $\overline{n}_B$:} Background photons arise from sources such as celestial radiation, local equipment, and detector dark counts. We model $\overline{n}_B$ as a \emph{truncated Gaussian}~\cite{alexander1997optical}:
    \[
    \overline{n}_B \sim \mathcal{N}_{\;\,[0,b]}\!\left(\mu_{n_B},\, \sigma^2_{n_B}\right),
    \]
    where truncation enforces the physical constraint $\overline{n}_B \ge 0$ and optionally caps extreme values at $b$ to model sensor saturation or environmental maxima.
\end{itemize}

For analytical tractability, we assume $\eta$ and $\overline{n}_B$ are statistically independent, a reasonable approximation when turbulence-induced fading and background illumination have distinct physical origins. Scenarios with coupling (e.g., dense fog) are left for future work.

Throughout this paper, we interpret the random pair \((\eta,\overline{n}_B)\) as a \emph{frame-level} channel state: for any given frame of \(n\) channel uses, the parameters \(\eta\) and \(\overline{n}_B\) are taken to be constant across the \(n\) slots, but are randomly redrawn from their governing distributions across different frames/sessions. As a result, the probabilistic constraints introduced later are imposed over the distribution of these quasi-static frame states, rather than over independently varying slot-level channel realizations. Equivalently, the product-form expressions in Section~\ref{sec:model} should be read as holding \emph{conditional on a realized frame state} \((\eta,\overline{n}_B)\), while the risk constraints later average over the distribution of these frame states across frames.

\subsection{Operational Knowledge and Strategy Constraints}

We specify the information available to each party:
\begin{itemize}
    \item \textbf{Alice and Bob:} Know the distributions $f_\eta$ and $f_{\overline{n}_B}$, but not the instantaneous realizations during transmission. This knowledge is acquired before the covert session via non-covert probing, long-term environmental sensing, or predictive modeling.
    \item \textbf{Willie:} We adopt a conservative threat model in which Willie knows the same channel laws as Alice and Bob and, for each frame, his detection capability is evaluated conditional on the realized frame-level state \((\eta,\overline{n}_B)\). Operationally, this corresponds to a warden with per-frame CSI or sufficiently accurate per-frame estimation of the realized state.
    \item \textbf{No instantaneous CSI:} During covert transmission, Alice receives no real-time channel state information (CSI). Feedback is impractical or would itself risk revealing activity.
\end{itemize}

This is the threat model used throughout the outage analysis in the paper. Accordingly, the covertness-outage event is defined by asking whether the chosen transmission probability \(q\) violates the adopted QRE-based covertness surrogate for the realized frame state, and the probability in the corresponding risk constraint is then taken over the distribution of frame states across frames/sessions. We do not analyze the alternative model in which Willie knows only the channel distribution but not the realized frame state and must instead perform a mixture-state hypothesis test. That leads to a different covertness formulation and is outside the scope of the present paper.

All probabilistic guarantees in this paper are \emph{conditional on the assumed channel laws} \(f_\eta\) and \(f_{\overline{n}_B}\). In other words, the risk budgets \((\epsilon_{\text{cov}},\epsilon_{\text{rel}})\), the induced quantiles, and the resulting operating points are calibrated with respect to the stipulated stochastic model. If the true channel statistics differ materially from the assumed distributions, the realized outage probabilities may no longer match the nominal budgets. Our results should therefore be interpreted as \emph{model-based risk guarantees}, not distribution-free guarantees.

\subsection{Fixed Transmission Strategy, Risk Budgets, and Outage Events}

In the absence of real-time CSI, Alice commits to a \emph{single fixed transmission strategy} specified by $(q, R)$ for an entire frame, where $q \in [0,1]$ is the per-channel use transmission probability and $R \in [0,1]$ is the chosen quantum code rate (design variable). Allowing $R=0$ explicitly includes the trivial zero-payload operating point, which is useful when the outage budgets are so stringent that no nonzero code rate is feasible. Let
\begin{equation}
\label{eq:Rach}
R_{\text{ach}}(\eta,\overline{n}_B) \;\triangleq\; \big(1 - H(\vec{p}(\eta,\overline{n}_B))\big)^+
\end{equation}
denote the \emph{instantaneous achievable rate} under the realized channel, given in~\eqref{eq:comm_rate}. We also recall the \emph{per-use covert throughput} $T \triangleq q\,R$, with optimal value $T^* = q^* R^*$. Here, $T$ is a \emph{design throughput}: it quantifies the scheduled payload per channel use under the fixed strategy $(q,R)$. The randomness of the channel enters separately through the covertness and reliability outage events, which are controlled by the risk budgets $\epsilon_{\text{cov}}$ and $\epsilon_{\text{rel}}$.

\paragraph{Risk budgets.}
We introduce two user-specified probability thresholds, called \emph{risk budgets}:
\[
\epsilon_{\text{cov}},\, \epsilon_{\text{rel}} \in (0,1).
\]
Here, $\epsilon_{\text{cov}}$ is the maximum tolerable probability (per frame, over the randomness of $(\eta,\overline{n}_B)$) that the covertness constraint is violated. $\epsilon_{\text{rel}}$ is the maximum tolerable probability (per frame) that decoding fails, i.e., that the chosen code rate $R$ exceeds the instantaneous achievable rate $R_{\text{ach}}(\eta,\overline{n}_B)$.

We consider:
\begin{itemize}
    \item \textbf{Covertness outage:} A security failure that occurs when the realized channel is sufficiently clear (equivalently, $c_{\text{cov}}(\eta,\overline{n}_B)$ is too small) so that the chosen $q$ violates the covertness constraint.
    \item \textbf{Reliability outage:} A communication failure that occurs when the realized channel is too noisy for Bob to decode at rate $R$, i.e., when $R_{\text{ach}}(\eta,\overline{n}_B) < R$.
\end{itemize}

Both outage probabilities must lie below their budgets:
\[
\mathbb{P}[\text{Covertness outage}] = \mathbb{P}\!\left[\, q > \frac{2\delta}{\sqrt{n}}\; c_{\text{cov}}(\eta,\overline{n}_B) \right] \;\le\; \epsilon_{\text{cov}},\]
\[\mathbb{P}[\text{Reliability outage}] = \mathbb{P}\!\left[\, R > R_{\text{ach}}(\eta,\overline{n}_B) \right] \;\le\; \epsilon_{\text{rel}}.
\]

The covertness-outage event above is induced by the QRE-based sufficient condition and sparse-transmission approximation introduced in Section~\ref{sec:model}. Accordingly, the budget \(\epsilon_{\text{cov}}\) should be interpreted as controlling outage relative to this adopted covertness surrogate under the quasi-static stochastic model of the paper. In the risk-constrained parameter regimes emphasized in our main simulations, the resulting optimizers remain in the sparse-transmission regime; outside that regime, the covertness constraint should be interpreted with corresponding caution.

To avoid ambiguity, we distinguish the following quantities. The \emph{instantaneous achievable rate} \(R_{\text{ach}}(\eta,\overline{n}_B)\) is the rate supported by a \emph{specific} realization of the channel parameters \((\eta,\overline{n}_B)\). As such, it is a random variable induced by the channel uncertainty. In simulations, we draw \(K\) realizations and record \(r_i \coloneqq R_{\text{ach}}(\eta_i,\overline{n}_{B,i})\) for \(i=1,\dots,K\). The \emph{chosen code rate} \(R\) is a single design variable that Alice fixes for the entire transmission (frame/session). A \emph{reliability outage} occurs whenever the chosen rate exceeds what the realized channel can support, i.e., when \(R > R_{\text{ach}}(\eta,\overline{n}_B)\).

\section{Problem Formulation}
\label{sec:problem}

The core design problem is to choose a fixed transmission strategy $(q,R)$ that balances design throughput against the risks of security and reliability failures. Our primary formulation is a \emph{risk-constrained} model that maximizes performance subject to hard probability budgets. We then study a secondary \emph{risk-adjusted} extension that internalizes outage costs directly in the objective and is used only for exploratory policy analysis rather than for guarantee-bearing design.

\subsection{Risk-Constrained Formulation}

We first seek the strategy that maximizes per-use covert throughput while ensuring that the outage probabilities induced by the adopted stochastic covertness-and-reliability model stay below prescribed budgets. This formulation fits operational settings with non-negotiable limits on failure (e.g., ``the probability of covertness outage must be below $0.1\%$'' or ``the probability of reliability outage must be below $1\%$''). Given the definition of $R_{\text{ach}}$ presented in Eq.\eqref{eq:Rach}, the optimization problem is formally stated as:
\begin{align}
\max_{q,\,R}\quad & T(q, R) = q\,R \label{eq:objective_constrained}\\
\text{s.t.}\quad &
\mathbb{P}\!\left[q > \frac{2\delta}{\sqrt{n}}\; c_{\text{cov}}(\eta,\overline{n}_B)\right] \;\le\; \epsilon_{\text{cov}}, \label{eq:cov_outage}\\
& \mathbb{P}\!\left[R > R_{\text{ach}}(\eta,\overline{n}_B)\right] \;\le\; \epsilon_{\text{rel}}, \label{eq:rel_outage}\\
& 0 \le q \le 1,\;\; 0 \le R \le 1. \nonumber
\end{align}
Here, the objective \eqref{eq:objective_constrained} is the \emph{per-use covert throughput} $T=qR$; the probabilities in \eqref{eq:cov_outage} and \eqref{eq:rel_outage} are taken over the randomness of $(\eta,\overline{n}_B)$. The chosen code rate, $R$, is a decision variable in the optimization, and we denote its optimal value as $R^*$. As we will show in Theorem~\ref{thm:optimal_throughput}, in the continuous regime emphasized in this paper the optimizer \((q^*,R^*)\) is characterized by quantiles of the induced distributions of \(c_{\text{cov}}(\eta,\overline{n}_B)\) and \(R_{\text{ach}}(\eta,\overline{n}_B)\), with the achievable-rate quantile denoted by \(R_{\max}\). To the best of our knowledge, this is among the first CQC formulations to treat physical-layer uncertainty stochastically while jointly optimizing throughput under explicit probabilistic risk constraints.

\subsection{Exploratory Extension: Risk-Adjusted Throughput}
\label{subsec:risk_adjusted}

While the risk-constrained model provides hard guarantees, it is often useful to encode risk preferences directly in the objective, as is common in financial engineering. We therefore define the \emph{risk-adjusted throughput}
\begin{align}
J(q,R)
&\;\triangleq\; T(q,R)
-\lambda_{\text{cov}}\;\mathbb{P}\!\left[q > \tfrac{2\delta}{\sqrt{n}}\; c_{\text{cov}}(\eta,\overline{n}_B)\right]\nonumber\\
&-\lambda_{\text{rel}}\;\mathbb{P}\!\left[R > R_{\text{ach}}(\eta,\overline{n}_B)\right],
\label{eq:J(q,R)}
\end{align}
where $T(q,R)=qR$ is the per-use throughput and $\lambda_{\text{cov}},\lambda_{\text{rel}}\!\ge 0$ are \emph{risk-aversion parameters} that penalize covertness and reliability outages, respectively. Larger $\lambda$ values reflect greater aversion to the corresponding failure mode. We then solve the following problem:
\begin{align}
\max_{q,\,R}\quad & J(q,R)\\
\text{s.t.}\quad & 0 \le q \le 1,\;\; 0 \le R \le 1. \nonumber
\end{align}

This formulation selects a single strategy that balances reward (throughput) against weighted risk; by sweeping $(\lambda_{\text{cov}},\lambda_{\text{rel}})$, one explores the trade-off induced by the weighted objective. Unlike the risk-constrained formulation, however, it does \emph{not} impose hard probabilistic guarantees on covertness or reliability. It should therefore be interpreted as a decision-theoretic design tool for exploring operating-point preferences, rather than as the primary formulation for guarantee-critical deployments. Theorem~\ref{thm:foc_risk_adjusted} characterizes the first-order conditions satisfied by any differentiable interior optimum of this formulation.

\subsection{Interpretation: The Risk–Performance Frontier}

Our stochastic framework generalizes prior perfect-knowledge models by characterizing the trade-off between performance and risk through a \emph{Pareto-frontier} viewpoint. A strategy is Pareto optimal if no other strategy improves one objective (e.g., throughput) without worsening at least one risk measure.

\begin{itemize}
   \item \textbf{Risk-constrained model:} Lets a designer \emph{select} a point on the frontier by specifying risk budgets $(\epsilon_{\text{cov}},\epsilon_{\text{rel}})$. \textbf{Theorem~\ref{thm:optimal_throughput}} characterizes the optimizer $(q^*,R^*)$ under these constraints and its dependence on the budgets. Varying $(\epsilon_{\text{cov}},\epsilon_{\text{rel}})$ traces the constrained frontier.
\item \textbf{Risk-adjusted model:} Explores points on the frontier by \emph{pricing} outages via $(\lambda_{\text{cov}},\lambda_{\text{rel}})$. \textbf{Theorem~\ref{thm:foc_risk_adjusted}} gives only the first-order conditions for differentiable interior optima. In practice, we explore this formulation numerically through a weighted-sum scalarization. For a nonconvex frontier, this need not recover every Pareto-optimal point.
\end{itemize}

In both formulations, there is no single universally best strategy, only operating-point trade-offs, with the risk-constrained model serving as the primary guarantee-bearing design tool. Our approach characterizes this trade-off frontier under the adopted model, enabling mission-specific choices instead of defaulting to a single conservative worst-case design.

\section{Solution Methodology}
\label{sec:solution}

The key methodological step is to reduce the stochastic CQC design problem from physical-layer uncertainty in \((\eta,\overline{n}_B)\) to calibrated covertness- and reliability-outage variables induced by \(c_{\text{cov}}(\eta,\overline{n}_B)\) and \(R_{\text{ach}}(\eta,\overline{n}_B)\). Once this reduction is made, the risk-constrained formulation admits a closed-form optimizer through quantile bounds, while the risk-adjusted extension is explored numerically. The risk-constrained objective $T(q,R)=qR$ is bilinear\footnote{%
$T(q,R)$ has $\nabla^2 T=\begin{psmallmatrix}0&1\\[2pt]1&0\end{psmallmatrix}$ (eigenvalues $\pm1$), hence it is neither convex nor concave on any open convex set. 
Likewise, the risk-adjusted objective $J(q,R)$ has Hessian 
$\nabla^2 J=\begin{psmallmatrix}
-\lambda_{\text{cov}}\tfrac{n}{4\delta^2}f'_{c_{\text{cov}}}(u) & 1\\[2pt]
1 & -\lambda_{\text{rel}}f'_{R_{\text{ach}}}(R)
\end{psmallmatrix}$ (with $u=\tfrac{q\sqrt{n}}{2\delta}$). 
At points where $f'_{c_{\text{cov}}}(u)=f'_{R_{\text{ach}}}(R)=0$ (e.g., interior modes of common unimodal PDFs), 
$\nabla^2 J=\begin{psmallmatrix}0&1\\[2pt]1&0\end{psmallmatrix}$ is indefinite; 
hence $J$ is also nonconvex.} (and therefore neither convex nor concave), while the risk-adjusted objective $J(q,R)$ is generally nonconvex for the reasons summarized in the footnote. Moreover, the constraints are probabilistic in $(\eta,\overline{n}_B)$. Thus, generic gradient-based solvers do not directly exploit the structure used here. We therefore develop practical methods that yield the optimal transmission strategy $(q^*,R^*)$ for the risk-constrained formulation together with a principled numerical strategy for the risk-adjusted formulation.

\subsection{Solving the Risk-Constrained Problem}

Our approach is to convert the probabilistic constraints in \eqref{eq:cov_outage}--\eqref{eq:rel_outage} into deterministic one-sided quantile bounds. 
Because the outage events are defined using strict inequalities, the exact quantile object for general distributions is not the usual inverse CDF, but the largest threshold whose strict lower-tail probability remains within the prescribed risk budget.

For a real-valued random variable $X$ and $\epsilon\in(0,1)$, define the
strict-outage quantile
\begin{equation}
\label{eq:strict_outage_quantile}
Q_X^{<}(\epsilon)
\triangleq
\sup\left\{x\in\mathbb{R}: \mathbb{P}[X<x]\le \epsilon\right\}.
\end{equation}
Equivalently, if $F_X$ denotes the CDF of $X$, then
\begin{equation}
\label{eq:strict_outage_quantile_cdf}
Q_X^{<}(\epsilon)
=
\sup\left\{x\in\mathbb{R}: F_X(x^-)\le \epsilon\right\},
\end{equation}
where
\begin{equation}
\label{eq:cdf_left_limit}
F_X(x^-)
\triangleq
\lim_{y\uparrow x} F_X(y).
\end{equation}
The definition in \eqref{eq:strict_outage_quantile} correctly handles atoms,
flat CDF regions, and boundary cases induced by the strict outage events.

\begin{theorem}[Optimal risk-constrained throughput]
\label{thm:optimal_throughput}
For the risk-constrained program \eqref{eq:objective_constrained}–\eqref{eq:rel_outage}, define the strict-outage distribution function
\[
F_X^{<}(x)\;\triangleq\;\mathbb{P}[X<x],
\]
and the corresponding strict-outage quantile
\[
Q_X^{<}(\epsilon)\;\triangleq\;\sup\{x:\,F_X^{<}(x)\le \epsilon\}.
\]
Then the maximum per-use covert throughput is
\[
T^* \;=\; q^* R^* \;=\; q_{\max}\,R_{\max},
\]
with
\[
q_{\max} \;=\; \min\!\left\{1,\; \frac{2\delta}{\sqrt{n}}\;Q_{c_{\text{cov}}}^{<}(\epsilon_{\text{cov}})\right\},
\qquad
R_{\max} \;=\; Q_{R_{\text{ach}}}^{<}(\epsilon_{\text{rel}}).
\]
This characterization is exact for arbitrary distributions, including cases with atoms at the relevant outage thresholds.
\end{theorem}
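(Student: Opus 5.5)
The plan is to exploit the fact that the two chance constraints are decoupled: \eqref{eq:cov_outage} involves only $q$ and \eqref{eq:rel_outage} involves only $R$. The feasible set is therefore a product $\mathcal{Q}\times\mathcal{R}$. Since $T=qR$ is nondecreasing in each argument on the nonnegative orthant, the maximum is obtained by taking the largest feasible $q$ and the largest feasible $R$ separately. The proof then reduces to two one-dimensional lemmas, each identifying the feasible set as an interval closed at its right endpoint, with that endpoint equal to the claimed quantile.

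The key step is a general fact about $F_X^{<}$. For a real random variable $X$, the map $x\mapsto F_X^{<}(x)=\mathbb{P}[X<x]$ is nondecreasing and \emph{left-continuous}. Left-continuity follows from continuity of measure along the increasing events $\{X<y\}\uparrow\{X<x\}$ as $y\uparrow x$. Hence $\{x: F_X^{<}(x)\le\epsilon\}$ is a down-set. It is also closed on the right: if $x_k\uparrow x^\star$ with $F^{<}_X(x_k)\le\epsilon$, then $F^{<}_X(x^\star)=\lim F^{<}_X(x_k)\le\epsilon$. So the set equals $(-\infty,Q^{<}_X(\epsilon)]$ whenever $Q^{<}_X(\epsilon)$ is finite, and the supremum is attained. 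This is exactly why the strict-outage quantile handles atoms correctly: with an atom at $x_0$, $\mathbb{P}[X<x_0]$ excludes the atom, so the threshold $x_0$ itself remains feasible. Finiteness also needs checking. For $\epsilon<1$ the set is bounded above, since $F^{<}_X(x)\to1$ as $x\to\infty$. For $c_{\text{cov}}\ge 0$ and $R_{\text{ach}}\ge0$ the set contains $0$, because $\mathbb{P}[X<0]=0$. So $Q^{<}\ge 0$.

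Next I apply this to each constraint. For the covertness constraint, rewrite the event as $\{c_{\text{cov}}<u\}$ with $u=q\sqrt n/(2\delta)$; this is a monotone bijective rescaling. The feasible $q\ge0$ are then exactly $q\le \frac{2\delta}{\sqrt n}Q^{<}_{c_{\text{cov}}}(\epsilon_{\text{cov}})$. Intersecting with $[0,1]$ gives $\mathcal Q=[0,q_{\max}]$. For reliability, the event $\{R>R_{\text{ach}}\}=\{R_{\text{ach}}<R\}$, so $\mathcal R=[0,\min\{1,Q^{<}_{R_{\text{ach}}}(\epsilon_{\text{rel}})\}]$. Because $R_{\text{ach}}\le1$ almost surely, $F^{<}_{R_{\text{ach}}}(x)=1>\epsilon_{\text{rel}}$ for $x>1$. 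Hence $Q^{<}\le1$ and the cap is automatic, matching $R_{\max}$ as stated. Finally, for any feasible $(q,R)$ we have $qR\le q_{\max}R$ and $q_{\max}R\le q_{\max}R_{\max}$, using nonnegativity. Equality is attained at $(q_{\max},R_{\max})$, which is feasible by right-closedness.

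I expect the main obstacle to be the attainment and closedness argument with atoms. This is where a naive inverse-CDF definition would fail, whether through a non-attained supremum or an off-by-an-atom error. The left-continuity of $F^{<}_X$ resolves it, and I would state and prove it carefully as a standalone lemma. I would also handle the degenerate cases explicitly. If $R_{\max}=0$ or $q_{\max}=0$, then $T^*=0$ and any point on the corresponding axis is optimal, which is still consistent with the formula. I would further note that uniqueness of the maximizer holds only when both quantities are positive.
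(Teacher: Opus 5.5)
Your proposal is correct and follows the paper's proof: both rewrite each chance constraint as a strict-outage bound $F^{<}(\cdot)\le\epsilon$, convert it to $q\le \frac{2\delta}{\sqrt n}Q^{<}_{c_{\text{cov}}}(\epsilon_{\text{cov}})$ and $R\le Q^{<}_{R_{\text{ach}}}(\epsilon_{\text{rel}})$, and then maximize the bilinear objective at the upper-right corner of the resulting rectangle. Your left-continuity lemma for $F_X^{<}$ and your check that $Q^{<}_{R_{\text{ach}}}(\epsilon_{\text{rel}})\le 1$ justify two steps the paper's proof uses but only asserts: the lemma makes the supremum attained, so that $F^{<}_X(u)\le\epsilon\iff u\le Q^{<}_X(\epsilon)$, and the check makes the box constraint $R\le 1$ automatically satisfied.
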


\begin{corollary}[Continuous-case reduction]
\label{cor:continuous_quantile_reduction}
If the relevant outage thresholds lie at continuity points of the CDFs of \(c_{\text{cov}}\) and \(R_{\text{ach}}\), then Theorem~\ref{thm:optimal_throughput} reduces to
\[
T^* \;=\; q^* R^* \;=\; q_{\max}\,R_{\max},
\]
with
\begin{equation}
\label{eq:q_max}
q_{\max} \;=\; \min\!\left\{1,\; \frac{2\delta}{\sqrt{n}}\;F^{-1}_{\,c_{\text{cov}}}(\epsilon_{\text{cov}})\right\},
\end{equation}
\begin{equation}
\label{eq:R_max}
R_{\max} \;=\; F^{-1}_{\,R_{\text{ach}}}(\epsilon_{\text{rel}}),
\end{equation}
where \(F_X^{-1}(\cdot)\) denotes the usual left-continuous quantile function.
\end{corollary}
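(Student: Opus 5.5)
The plan is to obtain the corollary directly from Theorem~\ref{thm:optimal_throughput} by proving one scalar identity. For a real random variable $X$ with CDF $F_X$, and under the stated regularity at the threshold, I will show
\[
Q_X^{<}(\epsilon)=F_X^{-1}(\epsilon),\qquad F_X^{-1}(\epsilon)\triangleq\inf\{x: F_X(x)\ge \epsilon\}.
\]
I will then apply this identity twice: to $X=c_{\text{cov}}(\eta,\overline{n}_B)$ with $\epsilon=\epsilon_{\text{cov}}$, and to $X=R_{\text{ach}}(\eta,\overline{n}_B)$ with $\epsilon=\epsilon_{\text{rel}}$. Substituting into the expressions for $q_{\max}$ and $R_{\max}$ in Theorem~\ref{thm:optimal_throughput} gives \eqref{eq:q_max} and \eqref{eq:R_max}. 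The factor $2\delta/\sqrt{n}>0$ and the outer $\min\{1,\cdot\}$ are applied to the quantile after it is computed, so they pass through unchanged. The formula $T^*=q_{\max}R_{\max}$ is inherited verbatim from the theorem.

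For the scalar identity, set $x_0=F_X^{-1}(\epsilon)$, which is finite because $\epsilon\in(0,1)$.

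\emph{Lower bound.} Take any $x<x_0$. By the definition of $x_0$ we have $F_X(x)<\epsilon$, hence $F_X^{<}(x)=F_X(x^-)\le F_X(x)<\epsilon$. So every $x<x_0$ lies in the set $\{x: F_X^{<}(x)\le\epsilon\}$, and therefore $Q_X^{<}(\epsilon)\ge x_0$.

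\emph{Upper bound.} Take any $x>x_0$. Monotonicity and right-continuity of $F_X$ give $F_X(x^-)\ge F_X(x_0)\ge\epsilon$. To conclude $Q_X^{<}(\epsilon)\le x_0$, this inequality must be strict for every $x>x_0$. That requires two things:
\begin{itemize}
\item $F_X$ is continuous at $x_0$, which excludes an atom at $x_0$ (in that case $F_X(x_0^-)<\epsilon\le F_X(x_0)$ and the strict-outage set behaves differently);
\item $F_X(x)>\epsilon$ for all $x>x_0$, which excludes a flat stretch of $F_X$ at height $\epsilon$ to the right of $x_0$.
\end{itemize}

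The main obstacle is exactly this second, "no flat region" point. Continuity of $F_X$ at $x_0$ alone does not suffice. If $F_X\equiv\epsilon$ on $[x_0,x_1]$, then $Q_X^{<}(\epsilon)=x_1$ while $F_X^{-1}(\epsilon)=x_0$, so the two quantiles differ. I will therefore read the hypothesis "the relevant outage thresholds lie at continuity points" as the regularity condition that $F_X$ is continuous at $x_0$ and strictly increasing immediately to the right of it. Equivalently, $\epsilon$ is a continuity point of the quantile map and the level set $\{F_X=\epsilon\}$ is the single point $x_0$. I will state this interpretation explicitly in the proof.

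Under that condition the two bounds give $Q_X^{<}(\epsilon)=x_0$. I will also remark that the condition holds automatically in the paper's main model. The truncated lognormal and truncated Gaussian laws have positive densities on their supports, and $c_{\text{cov}}$ and $R_{\text{ach}}$ are continuous and strictly monotone along suitable directions, so the induced CDFs have no flats in the interior of their supports. The only exception is a possible atom of $R_{\text{ach}}$ at $0$ caused by the $(\cdot)^+$ truncation. That atom is handled by Theorem~\ref{thm:optimal_throughput} itself and lies outside the hypothesis of the corollary.
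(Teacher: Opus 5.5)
Your proof is correct and follows the same route as the paper. The paper's appendix simply asserts that at continuity points $Q_X^{<}(\epsilon)$ reduces to $F_X^{-1}(\epsilon)$ and substitutes this into Theorem~\ref{thm:optimal_throughput}; your flat-region example shows that this step is incomplete, so the no-flat condition you add is genuinely needed. One refinement: continuity at $x_0$ is not itself required, because in general $Q_X^{<}(\epsilon)=\inf\{x: F_X(x)>\epsilon\}$, which coincides with $F_X^{-1}(\epsilon)$ exactly when $F_X(x)>\epsilon$ for all $x>x_0$, so an atom at $x_0$ with $F_X(x_0)>\epsilon$ does no harm.
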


\noindent\textit{Proof.} Deferred to~\ref{Proof:thm_risk_constrained}. The proof works directly with the strict-outage events and therefore does not require continuity assumptions. Corollary~\ref{cor:continuous_quantile_reduction} then recovers the simpler \(F^{-1}\)-based expression used in the continuous regime emphasized in our simulations.
\medskip

In practice, for the continuous regime emphasized in our simulations, the required quantiles are computed numerically from Monte Carlo samples of \((\eta,\overline{n}_B)\) as ordinary empirical quantiles (see Section~\ref{sec:results}).

\subsection{Characterizing the Risk-Adjusted Formulation via First-Order Conditions}
For the risk-adjusted objective \eqref{eq:J(q,R)}, we do not derive a closed-form global optimizer. Instead, we characterize differentiable interior optima via first-order conditions. Let
\[
c_{\text{cov}} \;=\; c_{\text{cov}}(\eta,\overline{n}_B), 
\qquad
R_{\text{ach}} \;=\; R_{\text{ach}}(\eta,\overline{n}_B),
\]
and denote by $F_{c_{\text{cov}}}$, $F_{R_{\text{ach}}}$ their CDFs and by $f_{c_{\text{cov}}}$, $f_{R_{\text{ach}}}$ their PDFs (when they exist; otherwise interpret conditions in the subgradient sense).

\begin{theorem}[First-order conditions for differentiable interior optima]
\label{thm:foc_risk_adjusted}
Fix $\lambda_{\text{cov}},\lambda_{\text{rel}}\!\ge 0$. Any interior maximizer $(q^*,R^*)$ of the risk-adjusted objective \eqref{eq:J(q,R)} that is differentiable at $(q^*,R^*)$ must satisfy
\begin{align}
R^* \;&=\; \lambda_{\text{cov}}\;\frac{\sqrt{n}}{2\delta}\;
f_{c_{\text{cov}}}\!\left(\frac{q^* \sqrt{n}}{2\delta}\right), \label{eq:foc_q}\\
q^* \;&=\; \lambda_{\text{rel}}\; f_{R_{\text{ach}}}(R^*). \label{eq:foc_R}
\end{align}
\end{theorem}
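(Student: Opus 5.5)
The plan is to rewrite both outage probabilities in \eqref{eq:J(q,R)} as CDFs of the induced scalar random variables, so that $J$ becomes an explicit function of $(q,R)$, and then to apply the elementary necessary condition for an interior maximum: the gradient vanishes.

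First, note that $\frac{2\delta}{\sqrt{n}}>0$. Hence the covertness-outage event $\{q > \frac{2\delta}{\sqrt{n}}c_{\text{cov}}\}$ coincides with $\{c_{\text{cov}} < u\}$, where $u = \frac{q\sqrt{n}}{2\delta}$. This gives
\[
\mathbb{P}[\text{cov. outage}] = F^{<}_{c_{\text{cov}}}(u),
\qquad
\mathbb{P}[\text{rel. outage}] = F^{<}_{R_{\text{ach}}}(R),
\]
and therefore
\[
J(q,R) = qR - \lambda_{\text{cov}} F^{<}_{c_{\text{cov}}}\!\left(\tfrac{q\sqrt{n}}{2\delta}\right) - \lambda_{\text{rel}} F^{<}_{R_{\text{ach}}}(R).
\]
The key structural point is that $J$ is the bilinear term $qR$ minus a function of $q$ alone and a function of $R$ alone. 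This separability makes the partial derivatives decouple.

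Second, I will use the hypothesis that $J$ is differentiable at the interior point $(q^*,R^*)$. Since $qR$ is smooth, each one-variable map $q\mapsto F^{<}_{c_{\text{cov}}}(q\sqrt{n}/(2\delta))$ and $R\mapsto F^{<}_{R_{\text{ach}}}(R)$ must be differentiable at the relevant point, as it equals a smooth function minus $J$ restricted to a coordinate line. Differentiability of $F^{<}_X$ at a point rules out an atom there, so $F^{<}_X = F_X$ in a neighbourhood sense at that point, and the derivative is by definition the density value $f_X$. In the subgradient reading mentioned before the theorem, this is exactly how $f$ is to be interpreted. The chain rule then gives
\[
\partial_q J = R - \lambda_{\text{cov}}\tfrac{\sqrt{n}}{2\delta} f_{c_{\text{cov}}}\!\left(\tfrac{q\sqrt{n}}{2\delta}\right),
\qquad
\partial_R J = q - \lambda_{\text{rel}} f_{R_{\text{ach}}}(R).
\]

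Third, because $(q^*,R^*)$ lies in the open square $(0,1)^2$ and is a local (indeed global) maximizer at which $J$ is differentiable, Fermat's theorem gives $\nabla J(q^*,R^*)=0$. Setting the two partials to zero yields \eqref{eq:foc_q} and \eqref{eq:foc_R} directly.

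The main obstacle I expect is the second step. There I must justify identifying the derivative of the strict-outage CDF with the PDF, and argue that differentiability of $J$ transfers to each CDF term. I would handle this via the separable structure: restrict $J$ to the lines $R=R^*$ and $q=q^*$ to get one-dimensional differentiability. If densities exist only almost everywhere, I would state the conditions with $f$ defined as the derivative of $F^{<}$ at that point, consistent with the paper's convention.
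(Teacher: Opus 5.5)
Your proposal is correct and follows essentially the same route as the paper. Both arguments write the two outage probabilities as CDFs evaluated at $\tfrac{q\sqrt{n}}{2\delta}$ and $R$, differentiate $J(q,R)=qR-\lambda_{\text{cov}}F_{c_{\text{cov}}}(\cdot)-\lambda_{\text{rel}}F_{R_{\text{ach}}}(\cdot)$ by the chain rule, and set the gradient to zero at the interior maximizer.

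The only difference is that you derive differentiability (and absence of an atom) of each CDF term from differentiability of $J$ along coordinate lines, whereas the paper simply assumes continuity and differentiability at the evaluation points. That transfer needs $\lambda>0$. The gap is harmless: if $\lambda_{\text{cov}}=0$ or $\lambda_{\text{rel}}=0$, the objective is strictly increasing in $q$ (resp.\ $R$) on the interior, so no interior maximizer exists and the statement is vacuous.
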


\noindent\textit{Proof.} Deferred to~\ref{Proof:thm_foc_risk_adjusted}. The result follows by writing the objective as
\[
J(q,R)=qR-\lambda_{\text{cov}}F_{c_{\text{cov}}}\!\left(\tfrac{q\sqrt{n}}{2\delta}\right)
-\lambda_{\text{rel}}F_{R_{\text{ach}}}(R),
\]
and setting the partial derivatives to zero, using $\frac{d}{dx}F_X(x)=f_X(x)$ and the chain rule.

\begin{figure}[b]
    \centering
    \includegraphics[width=\linewidth]{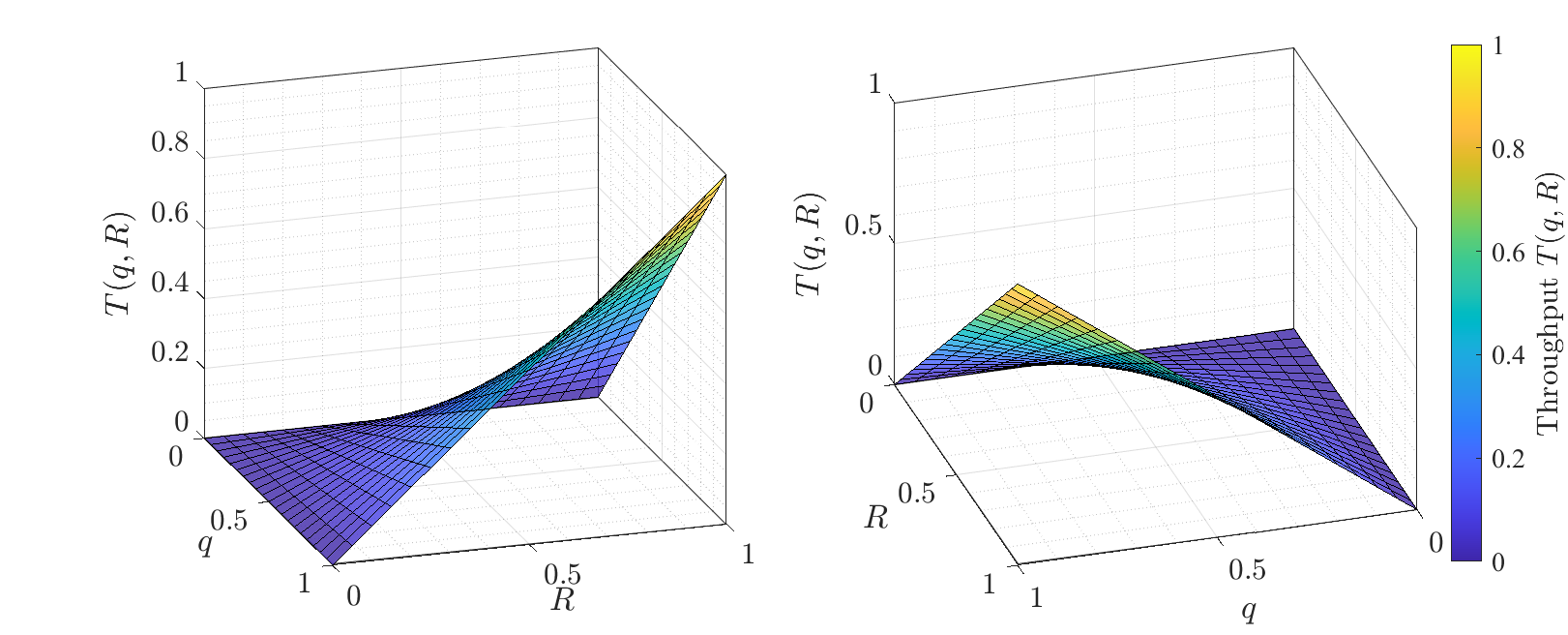}
    \caption{The per-use covert throughput $T(q,R)=qR$ is bilinear and therefore neither convex nor concave, which motivates the tailored solution methods used in this paper.}
    \label{fig:saddle}
\end{figure}

\subsection{Justification and Practical Considerations}

Figure~\ref{fig:saddle} illustrates the bilinear geometry of the throughput objective $qR$. However, the main simplification in the risk-constrained formulation does not come merely from observing nonconvexity; it comes from the CQC-specific reduction of stochastic physical-layer uncertainty to calibrated outage variables, which converts the chance constraints into quantile bounds and the feasible set into a rectangle in \((q,R)\). This yields the optimizer directly. For the risk-adjusted formulation, no analogous closed-form reduction is available, so we evaluate the weighted objective numerically on a dense grid.

\paragraph{Robustness and implementation.}
\begin{itemize}
    \item \emph{Risk-constrained:} In the continuous regime emphasized in this paper, the quantile-based bounds yield the global maximizer $(q_{\max},R_{\max})$ without iterative search. In practice, the required quantiles are estimated via Monte Carlo from the stipulated distributions of $(\eta,\overline{n}_B)$.
    \item \emph{Risk-adjusted:} Eqs.~\eqref{eq:foc_q} and \eqref{eq:foc_R} provide structural conditions for interior optima and help interpret how the optimizer depends on the two penalties. In the numerical results, however, we evaluate the risk-adjusted objective directly on a dense $(q,R)$ grid, since Monte Carlo-estimated outage probabilities induce step-like empirical CDFs and can lead to boundary-pinned solutions. Because the covertness surrogate used in this paper is derived from a sparse-transmission (\emph{small}-$q$) approximation, operating points with large $q$ should be interpreted qualitatively rather than as physically meaningful covert operating points.
\end{itemize}
Both methods are not tied to a specific parametric distribution and are parallelizable. They apply directly under the independence assumption in Section~\ref{sec:channel-models}.

\section{Analytical and Simulation Results}
\label{sec:results}

To evaluate our optimization framework and quantify the trade-off between covert throughput and risk, we run large-scale Monte Carlo experiments. We choose physically motivated parameter settings intended to represent FSO-inspired operating scenarios, and report performance landscapes that illustrate how a designer can select operating points under stochastic uncertainty.

\subsection{Simulation Parameters}
To highlight the fundamental behavior of our risk-aware framework, we begin with a high-quality baseline stochastic channel model and later stress the system with more challenging conditions. Unless stated otherwise, we consider a CQC system over an FSO link with the following settings:
\begin{itemize}
    \item \textbf{Channel uses ($n$):} $n = 10^7$ in most scenarios. This places the operation in a large-blocklength regime in which the $\sqrt{n}$ covertness scaling is the relevant asymptotic benchmark, while finite-size corrections are not modeled explicitly in the present study.
    \item \textbf{Security threshold ($\delta$):} $\delta = 0.05$, corresponding to the standard covertness target \(P_{e,\min} \ge \tfrac{1}{2}-\delta\) used throughout the paper.
    \item \textbf{Transmittance Model ($\eta$):} We model transmittance using a truncated lognormal distribution on $(0,1]$. In the baseline scenario, the underlying pre-truncation lognormal parameters are $\mu_{\ln(\eta)}=-0.0126$ and $\sigma_{\ln(\eta)}=0.05$, and samples outside $(0,1]$ are rejected (equivalently, resampled from the truncated law). This yields a physically admissible high-quality channel with low volatility. Because truncation changes the moments, all reported numerical results are computed from the truncated samples actually used in the simulations.
    \item \textbf{Thermal-noise model ($\overline{n}_B$):} $\overline{n}_B \sim \mathcal{N}_{[0,\,0.5]}\!\big(\mu_{n_B},\,\sigma_{n_B}^2\big)$ (truncated Gaussian). For the baseline, \(\mu_{n_B}=0.005\) and \(\sigma_{n_B}=0.001\), modeling an ultra–low-noise environment (e.g., nighttime operation with filtering) with a physical cap at $0.5$.
    \item \textbf{Monte Carlo samples ($K$):} $K=10^6$ i.i.d.\ draws $(\eta_i,\overline{n}_{B,i})$. This provides good empirical resolution over most of the risk range considered here. For the most stringent budgets near $10^{-5}$, however, the corresponding empirical quantiles are estimated from relatively sparse tail samples and should therefore be interpreted with the usual Monte Carlo granularity of rare-event estimation.
\end{itemize}
We evaluate risk budgets on the numerical grids described below and reuse the static sample set as described below, enabling rapid exploration of the Pareto frontier without repeated sampling.

\subsection{Simulation Procedure}
We map the risk–performance Pareto frontier with an efficient two-stage pipeline that avoids re-running expensive Monte Carlo simulations for each operating point.

In a one-time setup, we draw a large, fixed set of channel realizations and precompute the induced performance metrics:
\begin{enumerate}
    \item Generate $K = 10^6$ i.i.d.\ samples $(\eta_i,\overline{n}_{B,i})$ from the stipulated distributions, with $\eta_i$ drawn from the truncated lognormal law on $(0,1]$. Each sample represents one \emph{frame-level} channel realization, i.e., one quasi-static pair held fixed over the corresponding frame of \(n\) channel uses.
    \item For each sample, compute the covertness constant $c_{\text{cov},i}$ and the instantaneous achievable rate
    \[
        r_i \;\triangleq\; R_{\text{ach}}(\eta_i,\overline{n}_{B,i})
        \;=\; \big(1-H(\vec p(\eta_i,\overline{n}_{B,i}))\big)^+ .
    \]
    This yields two stored arrays $\{c_{\text{cov},i}\}_{i=1}^K$ and $\{r_i\}_{i=1}^K$.
\end{enumerate}

We next evaluate the risk--performance frontier by reusing the static arrays. The specific risk points are chosen based on the analysis type. For the throughput curves and the decade-gain calculations in Table~\ref{tab:risk_gain}, the symmetric risk threshold $\epsilon$ is swept over logarithmically spaced values spanning $10^{-5}$ to $10^{-1}$. For the sensitivity plot in Figure~\ref{fig:sensitivities}, we use the focused range $\epsilon\in[10^{-4},10^{-1}]$, where the estimated PDFs are numerically well behaved.\footnote{At the extreme lower end of this sweep, empirical quantiles are based on rare tail samples. The corresponding points are still useful for showing the trend toward highly conservative operation, but they should not be over-interpreted as high-precision estimates of ultra-rare-event quantiles.} To generate the 2D scheduled-payload surface, we evaluate the performance on a $20\times 20$ grid in which $\epsilon_{\text{cov}}$ and $\epsilon_{\text{rel}}$ are each drawn from the same logarithmically spaced vector spanning $10^{-5}$ to $10^{-1}$. For each risk point, we perform the following:
\begin{enumerate}
    \setcounter{enumi}{2}
    \item Estimate the required quantiles from the stored samples and set
    \[
    q_{\max} \;=\; \min\!\left\{1,\; \frac{2\delta}{\sqrt{n}}\; \widehat{F}^{-1}_{\,c_{\text{cov}}}(\epsilon_{\text{cov}})\right\},
    \qquad
    R_{\max} \;=\; \widehat{F}^{-1}_{\,R_{\text{ach}}}(\epsilon_{\text{rel}}),
    \]
    where $\widehat{F}^{-1}$ denotes the empirical quantile function.
    \item The optimal strategy is $(q^*,R^*)=(q_{\max},R_{\max})$ with per-use design throughput $T^* = q^* R^*$ and total scheduled payload \(nT^*\) over \(n\) channel uses, where the cap in $q_{\max}$ enforces the global box constraint $q\le 1$.
\end{enumerate}

All experiments use a fixed random seed for reproducibility. With $K=10^6$, the empirical quantiles are stable over the main portion of the plotted risk range. For analyses that include budgets near $10^{-5}$, the corresponding estimates are necessarily based on sparse tail samples and should be interpreted with that finite-sample resolution in mind. Simulations were implemented in MATLAB on a standard desktop. The one-time sample generation takes minutes, and each frontier point thereafter is computed in milliseconds by reusing the cached arrays. Accordingly, the qualitative trends and relative ordering of the curves are the main objects of interpretation, especially at the smallest plotted risk levels. For Figure~\ref{fig:sensitivities}, the quantities \(S_{\text{cov}}\) and \(S_{\text{rel}}\) are evaluated numerically along the symmetric-budget line \(\epsilon_{\text{cov}}=\epsilon_{\text{rel}}=\epsilon\). Specifically, for each plotted \(\epsilon\), we estimate \(S_{\text{cov}}\) by finite differences of \(T^*(\epsilon_{\text{cov}},\epsilon)\) with respect to \(\epsilon_{\text{cov}}\) at \(\epsilon_{\text{cov}}=\epsilon\), and estimate \(S_{\text{rel}}\) by finite differences of \(T^*(\epsilon,\epsilon_{\text{rel}})\) with respect to \(\epsilon_{\text{rel}}\) at \(\epsilon_{\text{rel}}=\epsilon\).

\subsection{Analytical Benchmark: The Exponential-Noise Channel}

Before presenting numerical results for the full stochastic model, we first analyze a simplified special case that admits an exact solution. This serves two purposes: (i) it provides a theoretical benchmark to validate our Monte Carlo implementation; and (ii) it offers a transparent, closed-form example that illustrates (without numerical abstraction) how the risk-constrained optimization operates under uncertainty. In a simplified model we specialize to:
\begin{itemize}
    \item \textbf{Fixed transmittance:} $\eta=\eta_0$ is known and constant.
    \item \textbf{Exponential noise:} The mean thermal photon number follows an exponential law with rate $\lambda$,
    \[
        \overline{n}_B \sim \mathrm{Exp}(\lambda), 
        \qquad f_{\overline{n}_B}(x)=\lambda e^{-\lambda x},\; x\ge 0.
    \]
\end{itemize}
Under these assumptions, all channel uncertainty is driven by $\overline{n}_B$, enabling a closed-form optimum.

\begin{lemma}[Optimal strategy for the exponential-noise channel]
\label{lem:exponential_case}
Consider a CQC link with fixed transmittance $\eta_0$ and $\overline{n}_B \sim \mathrm{Exp}(\lambda)$. The risk-constrained optimum is $(q^*,R^*)=(q_{\max},R_{\max})$ with
\[
        q_{\max} \;=\; \min\!\left\{1,\; \frac{2\delta}{\sqrt{n}}\; k \sqrt{\frac{Z^{2}-1}{4\eta_0}}\right\},
    \]
    \[
R_{\max} \;=\; \Big[\,1 - H\!\big(\vec p(\eta_0,\, -\tfrac{1}{\lambda}\ln(\epsilon_{\text{rel}}))\big)\,\Big]^+,
\]
where $k=\tfrac{\sqrt{2\eta_0}}{1-\eta_0}$ and $Z = 1 - \tfrac{2\eta_0}{\lambda}\ln(1-\epsilon_{\text{cov}})$.
\end{lemma}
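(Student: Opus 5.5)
Since $\eta=\eta_0$ is fixed, both outage variables are functions of the single random variable $\overline{n}_B\sim\mathrm{Exp}(\lambda)$, whose CDF $F(x)=1-e^{-\lambda x}$ is continuous and strictly increasing on $[0,\infty)$. I will therefore apply Corollary~\ref{cor:continuous_quantile_reduction} of Theorem~\ref{thm:optimal_throughput}. That reduces the problem to computing $F^{-1}_{c_{\text{cov}}}(\epsilon_{\text{cov}})$ and $F^{-1}_{R_{\text{ach}}}(\epsilon_{\text{rel}})$ as images of quantiles of $\overline{n}_B$ under monotone maps. The general principle is as follows. If $g$ is continuous and strictly increasing, then $F^{-1}_{g(X)}(\epsilon)=g(F_X^{-1}(\epsilon))$. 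If $g$ is nonincreasing, then the lower-$\epsilon$ quantile of $g(X)$ is $g$ evaluated at the upper quantile $F_X^{-1}(1-\epsilon)$.

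\emph{Covertness.} With $\eta_0$ fixed, $c_{\text{cov}}=k\sqrt{\overline{n}_B(1+\eta_0\overline{n}_B)}$ with $k=\sqrt{2\eta_0}/(1-\eta_0)$. This is continuous and strictly increasing in $\overline{n}_B\ge 0$. Hence $F^{-1}_{c_{\text{cov}}}(\epsilon_{\text{cov}})=k\sqrt{x_c(1+\eta_0 x_c)}$ with $x_c=-\tfrac1\lambda\ln(1-\epsilon_{\text{cov}})$. I then rewrite this in the stated form by completing the square:
\[
x(1+\eta_0x)=\frac{(1+2\eta_0x)^2-1}{4\eta_0}.
\]
Setting $Z=1+2\eta_0x_c=1-\tfrac{2\eta_0}{\lambda}\ln(1-\epsilon_{\text{cov}})$ gives $k\sqrt{(Z^2-1)/(4\eta_0)}$. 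Substituting into \eqref{eq:q_max} yields $q_{\max}$.

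\emph{Reliability.} I need the monotonicity of $R_{\text{ach}}$ in $\overline{n}_B$, and this is the step requiring care. From \eqref{eq:p}, $p$ is increasing in $\overline{n}_B$ because $(1+(1-\eta_0)\overline{n}_B)^{4}$ increases. Next I show that $H(\vec p)$ is increasing in $p$ on $[0,1]$. Its derivative is
\[
\tfrac34\log_2\!\frac{1-3p/4}{p/4},
\]
which is nonnegative exactly when $p\le 1$. Thus $1-H$ is decreasing, and so is its positive part, although only weakly once it hits zero. Consequently $R_{\text{ach}}<R$ holds iff $\overline{n}_B>x_r(R)$ for the appropriate threshold on the strictly decreasing part. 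The outage probability is then $e^{-\lambda x_r}$, and the constraint $e^{-\lambda x_r}\le\epsilon_{\text{rel}}$ is equivalent to $x_r\ge -\tfrac1\lambda\ln\epsilon_{\text{rel}}$. Since $R_{\text{ach}}$ is decreasing, the largest feasible $R$ is therefore $R_{\text{ach}}(\eta_0,-\tfrac1\lambda\ln\epsilon_{\text{rel}})$, which is the claimed $R_{\max}$.

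I expect the main obstacle to be the flat region where $(\cdot)^+$ clips to zero, since there the strict monotonicity underlying the continuous reduction fails. I would handle this directly with the strict-outage quantile of Theorem~\ref{thm:optimal_throughput}. If the evaluated rate is $0$, then $R=0$ is trivially feasible, because $\mathbb{P}[R_{\text{ach}}<0]=0$. Any $R>0$ would instead have outage probability at least $\mathbb{P}[\overline{n}_B\ge -\tfrac1\lambda\ln\epsilon_{\text{rel}}]=\epsilon_{\text{rel}}$, and more than that on the clipped side, so it is infeasible. The formula therefore still holds. With both quantiles in hand, Theorem~\ref{thm:optimal_throughput} gives $(q^*,R^*)=(q_{\max},R_{\max})$.
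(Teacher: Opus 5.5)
Your proposal is correct and follows essentially the paper's route. Both arguments push the exponential quantile of $\overline{n}_B$ through the monotone maps $c_{\text{cov}}(\eta_0,\cdot)$ and $R_{\text{ach}}(\eta_0,\cdot)$; the only cosmetic difference is that the paper inverts the CDF of $c_{\text{cov}}$ by solving the quadratic, whereas you evaluate at $-\tfrac{1}{\lambda}\ln(1-\epsilon_{\text{cov}})$ and complete the square, which is the same computation. Your derivative check that $H(\vec p)$ is increasing in $p$, and your strict-outage handling of the atom of $R_{\text{ach}}$ at zero, fill in steps that the paper's appendix only asserts or leaves out.
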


\begin{proof}[Proof] See Appendix~\ref{app:proof_lemma_1}.
\end{proof}

\paragraph{Numerical illustration.}
As a concrete example, consider a high-quality link with $\eta_0=0.99$ and exponential noise with rate $\lambda=10$. For $n=10^7$ uses and risk budgets $\epsilon_{\text{cov}}=\epsilon_{\text{rel}}=0.1$, Lemma~\ref{lem:exponential_case} gives
$q_{\max}\approx 4.59\times 10^{-4}$ and $R_{\max}\approx 0.8692$, i.e., $T^*\approx 3.99\times 10^{-4}$ and about $3.99\times 10^{3}$ scheduled covert qubits over the frame. In a slightly degraded setting with $\eta_0=0.9$ (typical of challenging FSO links), the optimum becomes much more conservative: $q_{\max}\approx 4.4\times 10^{-5}$ and $R_{\max}\approx 0.2204$, yielding fewer than $10^{2}$ scheduled covert qubits per $10^7$ uses.

These calculations highlight the steep trade-off imposed by stringent covertness and reliability requirements: maintaining very small covertness-outage and decoding-failure probabilities requires extremely sparse transmissions and strong coding, which sharply reduces throughput. Nevertheless, such ultra-low-rate links can be indispensable in high-risk scenarios where the \emph{existence} of communication must remain hidden. 

The closed-form solution in Lemma~\ref{lem:exponential_case} provides a ground-truth reference for validating our Monte Carlo pipeline and clarifies how risk budgets couple with the noise-tail parameter $\lambda$ to shape the optimal strategy. To confirm correctness, we compare the Monte Carlo procedure to Lemma~\ref{lem:exponential_case} using $\eta_0=0.9$, $\lambda=10$, and $n=10^7$. Table~\ref{tab:benchmark_validation} shows close agreement across a wide range of risk thresholds, with relative errors below $5\%$ whenever $R_{\max}>0$. For $\epsilon=10^{-3}$, both methods correctly yield $R_{\max}\approx 0$, indicating that a minimum risk budget must be exceeded before a nontrivial covert link can be established in this channel.

\begin{table}[t!]
    \centering
    \small
    \caption{Validation of Monte Carlo against the analytical benchmark (fixed $\eta_0$, exponential $\overline{n}_B$) from Lemma~\ref{lem:exponential_case}, under symmetric risk budgets \(\epsilon_{\text{cov}}=\epsilon_{\text{rel}}=\epsilon\). The reported percentage errors are computed from full-precision values, whereas the displayed theory and simulation entries are rounded for readability.}
    \label{tab:benchmark_validation}
    \begin{tabular}{l c c c c}
        \toprule
        \textbf{Risk ($\epsilon$)} & \textbf{Metric} & \textbf{Theory} & \textbf{MC Sim.} & \textbf{Error (\%)} \\
        \midrule
        \multirow{2}{*}{$10^{-3}$} & $q_{\max}$ & $4.0 \times 10^{-6}$ & $4.0 \times 10^{-6}$ & 1.15 \\
                                   & $R_{\max}$ & $\approx 0$ & $\approx 0$ & --- \\
        \midrule
        \multirow{2}{*}{$10^{-2}$} & $q_{\max}$ & $1.3 \times 10^{-5}$ & $1.3 \times 10^{-5}$ & 0.49 \\
                                   & $R_{\max}$ & 0.011262 & 0.011724 & 4.10 \\
        \midrule
        \multirow{2}{*}{$10^{-1}$} & $q_{\max}$ & $4.4 \times 10^{-5}$ & $4.4 \times 10^{-5}$ & 0.04 \\
                                   & $R_{\max}$ & 0.220380 & 0.220336 & 0.02 \\
        \midrule
        \multirow{2}{*}{$0.2$}     & $q_{\max}$ & $6.4 \times 10^{-5}$ & $6.4 \times 10^{-5}$ & 0.06 \\
                                   & $R_{\max}$ & 0.294959 & 0.294767 & 0.07 \\
        \midrule
        \multirow{2}{*}{$0.5$}     & $q_{\max}$ & $1.15 \times 10^{-4}$ & $1.15 \times 10^{-4}$ & 0.04 \\
                                   & $R_{\max}$ & 0.404270 & 0.404337 & 0.02 \\
        \bottomrule
    \end{tabular}
\end{table}

\subsection{Results and Interpretation}

Before turning to the numerical plots, we formalize two structural properties of the trade-offs we observe: monotonicity of the performance frontier (Corollary~\ref{cor:monotonicity}) and the sensitivities of the optimum to the risk budgets (Proposition~\ref{prop:sensitivity}).
Throughout this section, the quantities \(T^*=q^*R^*\) and \(nT^*\) should be interpreted as \emph{scheduled design payloads} under the chosen operating point, not as outage-averaged realized goodput.

\begin{corollary}[Monotonicity of the Pareto frontier]
\label{cor:monotonicity}
For the risk-constrained problem, the optimal per-use throughput
\(
T^*(\epsilon_{\text{cov}},\epsilon_{\text{rel}})=q^* R^*
\)
is monotonically nondecreasing in each risk budget $\epsilon_{\text{cov}}$ and $\epsilon_{\text{rel}}$.
\end{corollary}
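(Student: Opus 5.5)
The cleanest route is a feasible-set inclusion argument, which does not need the explicit form of the optimizer; Theorem~\ref{thm:optimal_throughput} then serves as a cross-check. Denote by $\mathcal{F}(\epsilon_{\text{cov}},\epsilon_{\text{rel}})\subseteq[0,1]^2$ the set of pairs $(q,R)$ satisfying \eqref{eq:cov_outage}--\eqref{eq:rel_outage} and the box constraints. Then $T^*(\epsilon_{\text{cov}},\epsilon_{\text{rel}})=\sup_{(q,R)\in\mathcal{F}}qR$.

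First, take $\epsilon_{\text{cov}}\le\epsilon'_{\text{cov}}$ and $\epsilon_{\text{rel}}\le\epsilon'_{\text{rel}}$. If $(q,R)$ satisfies $\mathbb{P}[q>\tfrac{2\delta}{\sqrt n}c_{\text{cov}}]\le\epsilon_{\text{cov}}$, it also satisfies the same constraint with $\epsilon'_{\text{cov}}$. The reliability constraint behaves the same way. Hence $\mathcal{F}(\epsilon_{\text{cov}},\epsilon_{\text{rel}})\subseteq\mathcal{F}(\epsilon'_{\text{cov}},\epsilon'_{\text{rel}})$.

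Second, a supremum over a larger set cannot be smaller, so $T^*(\epsilon_{\text{cov}},\epsilon_{\text{rel}})\le T^*(\epsilon'_{\text{cov}},\epsilon'_{\text{rel}})$. Varying one budget at a time gives monotonicity in each argument separately.

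Non-emptiness is not an issue: $(0,0)$ is always feasible because $\mathbb{P}[0>\cdot]$ vanishes for the nonnegative quantities $c_{\text{cov}}$ and $R_{\text{ach}}$. So $T^*\ge 0$ is well defined throughout. By Theorem~\ref{thm:optimal_throughput} the supremum is attained at $(q_{\max},R_{\max})$, so $T^*=q^*R^*$ as stated.

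As a consistency check through the closed form, note that $\{x: F^{<}_X(x)\le\epsilon\}$ grows as $\epsilon$ grows. Therefore $Q^{<}_X(\epsilon)$ is nondecreasing in $\epsilon$. Both $q_{\max}=\min\{1,\tfrac{2\delta}{\sqrt n}Q^{<}_{c_{\text{cov}}}(\epsilon_{\text{cov}})\}$ and $R_{\max}=Q^{<}_{R_{\text{ach}}}(\epsilon_{\text{rel}})$ are then nondecreasing, and taking $\min$ with $1$ preserves this.

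The main obstacle on this second route is the product step: a product of nondecreasing functions is nondecreasing only if both factors are nonnegative. This holds because $c_{\text{cov}}\ge 0$ and $R_{\text{ach}}\ge 0$ almost surely. As a result, $F^{<}_X(0)=0\le\epsilon$, so both quantiles are at least $0$; also $R_{\text{ach}}\le 1$ keeps $R_{\max}$ finite. The feasible-set argument avoids this subtlety entirely, so I would present it as the primary proof.
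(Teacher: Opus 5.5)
Your proof is correct, and structurally it matches the paper's: both show that the feasible sets are nested as the budgets grow, then use that a maximum over a larger set cannot be smaller. The difference is in how you obtain the nesting. The paper first invokes Theorem~\ref{thm:optimal_throughput} to identify the feasible set with the rectangle $[0,q_{\max}(\epsilon_{\text{cov}})]\times[0,R_{\max}(\epsilon_{\text{rel}})]$. It then uses monotonicity of the quantiles in their probability level to nest the rectangles. You read the nesting off the chance constraints directly, since a constraint $\mathbb{P}[\cdot]\le\epsilon$ only gets weaker as $\epsilon$ grows. Your version is more elementary and more general. It does not depend on the quantile reformulation or on the bilinear objective; it would hold verbatim for any objective. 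You need Theorem~\ref{thm:optimal_throughput} only to assert that the supremum is attained and hence equals $q^*R^*$. The paper's route, in exchange, shows how each corner coordinate $q_{\max}$ and $R_{\max}$ moves separately, and that structure is exploited later in Proposition~\ref{prop:sensitivity}. Your secondary ``consistency check'' is essentially the paper's argument, except that a product of nonnegative nondecreasing factors replaces the maximum over nested rectangles. The paper's rectangle formulation tacitly relies on the same nonnegativity $q_{\max},R_{\max}\ge 0$ (guaranteed by $c_{\text{cov}},R_{\text{ach}}\ge 0$) that you make explicit.
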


\begin{proof}
Let
\[
S(\epsilon_{\text{cov}},\epsilon_{\text{rel}})
=
[0,q_{\max}(\epsilon_{\text{cov}})]\times[0,R_{\max}(\epsilon_{\text{rel}})]
\]
denote the feasible rectangle induced by the two risk budgets. Since quantiles are monotonically nondecreasing in their probability level, both \(q_{\max}(\epsilon_{\text{cov}})\) and \(R_{\max}(\epsilon_{\text{rel}})\) are monotonically nondecreasing in their respective arguments.

Therefore, if \(\epsilon_{\text{cov},1}\le \epsilon_{\text{cov},2}\) while \(\epsilon_{\text{rel}}\) is fixed, then
\[
S(\epsilon_{\text{cov},1},\epsilon_{\text{rel}})
\subseteq
S(\epsilon_{\text{cov},2},\epsilon_{\text{rel}}).
\]
Likewise, if \(\epsilon_{\text{rel},1}\le \epsilon_{\text{rel},2}\) while \(\epsilon_{\text{cov}}\) is fixed, then
\[
S(\epsilon_{\text{cov}},\epsilon_{\text{rel},1})
\subseteq
S(\epsilon_{\text{cov}},\epsilon_{\text{rel},2}).
\]

Therefore,
\[
\max_{(q,R)\in S(\epsilon_{\text{cov},1},\epsilon_{\text{rel}})} qR
\;\le\;
\max_{(q,R)\in S(\epsilon_{\text{cov},2},\epsilon_{\text{rel}})} qR,
\]
and similarly when only \(\epsilon_{\text{rel}}\) increases.

Hence \(T^*(\epsilon_{\text{cov}},\epsilon_{\text{rel}})\) is monotonically nondecreasing in each risk budget.
\end{proof}

To identify which risk constraint is the active bottleneck, we define risk sensitivities.

\begin{definition}[Risk sensitivity]
The covertness and reliability risk sensitivities are the partial derivatives of the optimal throughput with respect to their risk budgets:
\[
S_{\text{cov}} \;\triangleq\; \frac{\partial T^*}{\partial \epsilon_{\text{cov}}}, 
\qquad
S_{\text{rel}} \;\triangleq\; \frac{\partial T^*}{\partial \epsilon_{\text{rel}}}.
\]
\end{definition}

\begin{proposition}[Sensitivity formulas away from the saturation point]
\label{prop:sensitivity}
Assume that the evaluated quantiles \(F^{-1}_{c_{\text{cov}}}(\epsilon_{\text{cov}})\) and \(F^{-1}_{R_{\text{ach}}}(\epsilon_{\text{rel}})\) lie at points where the corresponding CDFs are differentiable and the associated densities are strictly positive. Define
\[
\widetilde q(\epsilon_{\text{cov}})
\;\triangleq\;
\frac{2\delta}{\sqrt{n}}\,F^{-1}_{c_{\text{cov}}}(\epsilon_{\text{cov}}),
\qquad
q_{\max}(\epsilon_{\text{cov}})=\min\{1,\widetilde q(\epsilon_{\text{cov}})\}.
\]
Then, away from the transition point \(\widetilde q(\epsilon_{\text{cov}})=1\),
\[
S_{\text{cov}}
=
\begin{cases}
\displaystyle
\frac{2\delta}{\sqrt{n}}\,
\frac{R_{\max}}{\,f_{c_{\text{cov}}}\!\big(F^{-1}_{c_{\text{cov}}}(\epsilon_{\text{cov}})\big)},
& \text{if } \widetilde q(\epsilon_{\text{cov}})<1,\\[2.2ex]
0, & \text{if } \widetilde q(\epsilon_{\text{cov}})>1,
\end{cases}
\]
and
\[
S_{\text{rel}}
= \frac{q_{\max}}{\,f_{R_{\text{ach}}}\!\big(F^{-1}_{R_{\text{ach}}}(\epsilon_{\text{rel}})\big)}\,.
\]
\end{proposition}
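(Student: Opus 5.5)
The plan is to start from the closed-form optimizer of Theorem~\ref{thm:optimal_throughput}, in the continuous-case form of Corollary~\ref{cor:continuous_quantile_reduction}. It gives
\[
T^*(\epsilon_{\text{cov}},\epsilon_{\text{rel}})=q_{\max}(\epsilon_{\text{cov}})\,R_{\max}(\epsilon_{\text{rel}}),
\qquad
q_{\max}=\min\{1,\widetilde q(\epsilon_{\text{cov}})\},
\qquad
R_{\max}=F^{-1}_{R_{\text{ach}}}(\epsilon_{\text{rel}}).
\]
Because $T^*$ factors into a function of $\epsilon_{\text{cov}}$ alone times a function of $\epsilon_{\text{rel}}$ alone, each partial derivative involves only one factor. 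That is,
\[
S_{\text{cov}}=R_{\max}\,\frac{d q_{\max}}{d\epsilon_{\text{cov}}},
\qquad
S_{\text{rel}}=q_{\max}\,\frac{d R_{\max}}{d\epsilon_{\text{rel}}}.
\]
The whole proof therefore reduces to differentiating the two quantile functions and handling the $\min\{1,\cdot\}$ cap.

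The key step is the inverse-function derivative of a quantile. Set $x_0=F^{-1}_X(\epsilon_0)$ and assume $F_X$ is differentiable at $x_0$ with $f_X(x_0)>0$. Then $F_X$ is strictly increasing near $x_0$, so $F_X(F_X^{-1}(\epsilon))=\epsilon$ holds in a neighbourhood of $\epsilon_0$. By the inverse function theorem (or directly from the difference quotient),
\[
\frac{d}{d\epsilon}F_X^{-1}(\epsilon)\Big|_{\epsilon_0}=\frac{1}{f_X(x_0)}.
\]
This is the step I expect to need the most care. Differentiability at the single point $x_0$ together with $f_X(x_0)>0$ does not by itself make $F_X$ continuous and strictly increasing on a neighbourhood. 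I would argue directly with the difference quotient: since $f_X(x_0)>0$, $F_X(x)-\epsilon_0$ changes sign across $x_0$ at first order. This forces the quantile at $\epsilon_0+h$ to satisfy $F_X^{-1}(\epsilon_0+h)-x_0 = h/f_X(x_0)+o(h)$. Alternatively, I would simply read the hypothesis as requiring local continuous differentiability. Applying the derivative to $X=R_{\text{ach}}$ gives $dR_{\max}/d\epsilon_{\text{rel}}=1/f_{R_{\text{ach}}}(F^{-1}_{R_{\text{ach}}}(\epsilon_{\text{rel}}))$, which yields $S_{\text{rel}}$. There is no cap on $R$ to worry about, since $R_{\max}\le 1$ automatically because $R_{\text{ach}}\in[0,1]$.

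For $S_{\text{cov}}$ I would split into cases according to the cap.
\begin{itemize}
  \item If $\widetilde q(\epsilon_{\text{cov}})<1$: the quantile $\widetilde q$ is continuous at this point because it is differentiable there, so $q_{\max}=\widetilde q$ on a neighbourhood. The chain rule with the constant factor $2\delta/\sqrt{n}$ then gives $dq_{\max}/d\epsilon_{\text{cov}}=\frac{2\delta}{\sqrt n}/f_{c_{\text{cov}}}(F^{-1}_{c_{\text{cov}}}(\epsilon_{\text{cov}}))$, and multiplying by $R_{\max}$ gives the first branch.
  \item If $\widetilde q(\epsilon_{\text{cov}})>1$: by the same continuity, $q_{\max}\equiv 1$ nearby, so the derivative is $0$.
\end{itemize}
At the transition point $\widetilde q=1$, the one-sided derivatives generally differ, which is exactly why that point is excluded from the statement. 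Finally, $q_{\max}$ in the formula for $S_{\text{rel}}$ is simply the factor held fixed when differentiating in $\epsilon_{\text{rel}}$.
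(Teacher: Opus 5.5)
Your proposal is correct and follows essentially the same route as the paper's proof. You factor $T^*=q_{\max}(\epsilon_{\text{cov}})\,R_{\max}(\epsilon_{\text{rel}})$, apply the inverse-function derivative $\frac{d}{d\epsilon}F_X^{-1}(\epsilon)=1/f_X(F_X^{-1}(\epsilon))$, and split on whether the cap in $\min\{1,\widetilde q\}$ is locally active. Your difference-quotient fallback is the right way to justify that derivative under mere pointwise differentiability, a step the paper simply asserts: monotonicity of $F_X$ sandwiches $F_X^{-1}(\epsilon_0+h)$ between $x_0+(1\pm\gamma)h/f_X(x_0)$ for any small $\gamma>0$.
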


\begin{proof}
Since \(T^*=q_{\max}R_{\max}\) and \(q_{\max}\) does not depend on \(\epsilon_{\text{rel}}\), the derivative with respect to \(\epsilon_{\text{rel}}\) follows from the chain rule and the inverse-function derivative
\[
\frac{d}{d\alpha}F^{-1}_X(\alpha)=\big[f_X(F^{-1}_X(\alpha))\big]^{-1},
\]
valid at quantile points where the CDF is differentiable and the density is strictly positive. For \(\epsilon_{\text{cov}}\), write \(q_{\max}=\min\{1,\widetilde q(\epsilon_{\text{cov}})\}\). If \(\widetilde q(\epsilon_{\text{cov}})<1\), differentiate \(\widetilde q\) directly; if \(\widetilde q(\epsilon_{\text{cov}})>1\), then \(q_{\max}=1\) is locally constant and hence \(S_{\text{cov}}=0\).
\end{proof}

\noindent\textit{Remark.} At the transition point \(\widetilde q(\epsilon_{\text{cov}})=1\), the sensitivity \(S_{\text{cov}}\) need not exist in the classical sense because the minimum operator creates a kink, so one should interpret it using one-sided derivatives. Moreover, because \(R_{\text{ach}}(\eta,\overline{n}_B)=\big(1-H(\vec p(\eta,\overline{n}_B))\big)^+\) can induce an atom at \(R_{\text{ach}}=0\), the sensitivities in Proposition~\ref{prop:sensitivity} need not exist when the relevant quantile falls on that atom or on another nondifferentiable point. In such cases, one should instead use one-sided finite differences or a subgradient-style interpretation. In the numerical results below, we therefore estimate these sensitivities from the computed risk-constrained frontier using finite differences, rather than by direct density estimation.

\begin{figure}[t]
    \centering
    \includegraphics[width=0.5\linewidth]{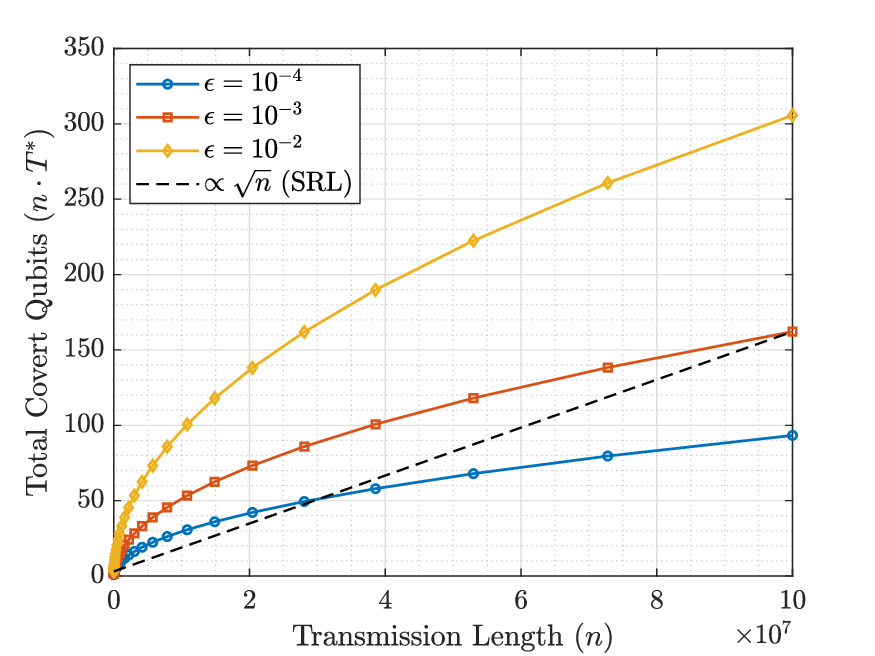}
    \caption{Scaling of total scheduled covert qubits ($nT^*$) with transmission length $n$ for several fixed symmetric risk budgets $\epsilon=\epsilon_{\mathrm{cov}}=\epsilon_{\mathrm{rel}}$. The dashed line indicates the $\sqrt{n}$ Square-Root-Law scaling.}
    \label{fig:pareto}
\end{figure}

Figure~\ref{fig:pareto} shows the primary risk-constrained result: the scaling of total scheduled covert qubits $n\,T^*$ with transmission length $n$ for several fixed symmetric risk budgets $\epsilon=\epsilon_{\text{cov}}=\epsilon_{\text{rel}}$. For any fixed $n$, larger symmetric risk budgets yield larger total scheduled covert payloads, which is consistent with Corollary~\ref{cor:monotonicity}.

As an illustrative operating point, take $\epsilon_{\text{cov}}=\epsilon_{\text{rel}}=0.01$ and $n=10^7$. For the baseline channel, the optimizer yields a sparse-transmission operating point with total scheduled payload on the order of $10^2$ covert qubits over the frame, consistent with Figure~\ref{fig:pareto}. We therefore avoid quoting a specific pair $(q^*,R^*)$ here, since such values should be reported only from the exact cached-sample run used to generate the plotted frontier.

Reading Figure~\ref{fig:pareto} across curves at a fixed $n$ reveals a highly nonlinear risk--reward trade-off. Quantitatively, Table~\ref{tab:risk_gain} reports the multiplicative gain in throughput when relaxing $\epsilon$ by one decade for $n=10^7$.

\begin{table}[b!]
    \centering
    \caption{Multiplicative throughput gain when relaxing risk by one decade ($n=10^7$).}
    \label{tab:risk_gain}
    \begin{tabular}{cc}
        \toprule
        \textbf{Risk relaxation} & \textbf{Throughput gain} \\
        \midrule
        $10^{-5} \to 10^{-4}$ & $2.04\times$ \\
        $10^{-4} \to 10^{-3}$ & $1.74\times$ \\
        $10^{-3} \to 10^{-2}$ & $1.89\times$ \\
        $10^{-2} \to 10^{-1}$ & $2.52\times$ \\
        \bottomrule
    \end{tabular}
\end{table}

A clear takeaway is that modest relaxations from extremely conservative budgets (e.g., $10^{-5}\!\to\!10^{-4}$) can more than \emph{double} throughput, highlighting the performance cost of worst-case-like operation.

Moving along any single curve, where $\epsilon$ is fixed and $n$ increases, isolates the effect of transmission length. In the uncapped regime relevant to Figure~\ref{fig:pareto}, the Square-Root Law implies $q_{\max}\propto 1/\sqrt{n}$ while $R_{\max}$ is independent of $n$, so $T^*=q^*R^*\propto 1/\sqrt{n}$ and therefore $n\,T^*$ grows only as $\sqrt{n}$. Thus, longer transmissions yield more total scheduled covert qubits but with diminishing returns. Under a per-session interpretation of the risk budgets, this also suggests that multiple shorter sessions can be preferable to a single prolonged transmission when those sessions can be treated as operationally separate.

\subsection{Impact of Channel Quality and Volatility}

\begin{figure}[t!]
    \centering
    \includegraphics[width=0.5\linewidth]{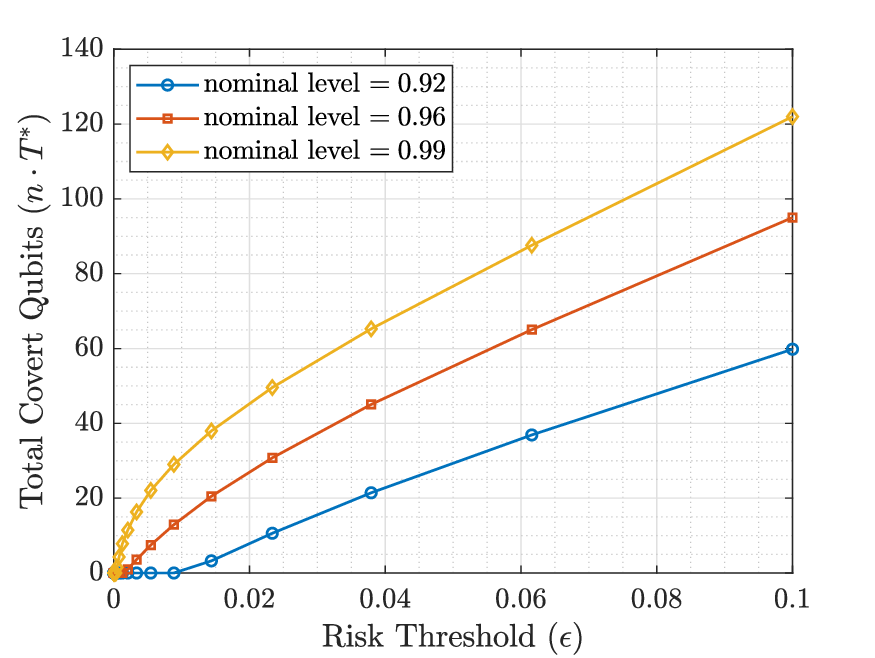}
    \caption{Impact of the nominal channel-quality scenario on the symmetric-risk trade-off curve for $n=10^7$, with $\epsilon_{\text{cov}}=\epsilon_{\text{rel}}=\epsilon$. The three curves correspond to jointly specified \emph{poor}, \emph{nominal}, and \emph{excellent} transmittance-and-noise regimes. Better nominal channel conditions offer a better risk-performance trade-off, while the poor regime shows a feasibility boundary, requiring a risk threshold of $\epsilon>0.01$.}
    \label{fig:pareto_mean_eta}
\end{figure}

We next quantify how physical channel characteristics shape achievable performance. Beyond the baseline scenario described above, we also study comparative channel-quality scenarios to illustrate how the frontier changes across different jointly specified transmittance-and-noise regimes. Figure~\ref{fig:pareto_mean_eta} compares three channel-quality scenarios (\emph{poor}, \emph{nominal}, and \emph{excellent}) defined as follows: the transmittance is modeled as a truncated lognormal law on $(0,1]$ with fixed $\sigma_{\ln(\eta)}=0.08$ and location parameter $\mu_{\ln(\eta)}=\log(m)-\tfrac{1}{2}(0.08)^2$, where the nominal target levels are $m\in\{0.92,\,0.96,\,0.99\}$ for the \emph{poor}, \emph{nominal}, and \emph{excellent} cases, respectively. Because the distribution is truncated to $(0,1]$, these values should be interpreted as scenario targets (and plot labels) rather than exact post-truncation means. The corresponding truncated-Gaussian noise models are $\overline{n}_B\sim\mathcal{N}_{[0,0.5]}(0.02,0.01^2)$, $\mathcal{N}_{[0,0.5]}(0.01,0.005^2)$, and $\mathcal{N}_{[0,0.5]}(0.005,0.001^2)$. These curves provide a quantitative answer to how much a jointly better transmittance-and-noise regime is worth under the adopted scenario family.

The plot also reveals a feasibility boundary for the poorer regime. For excellent and nominal channels, a viable covert link exists even at very small risks. For the poor channel, throughput remains zero until $\epsilon \gtrsim 0.01$. Thus, for low-quality channels, a minimum level of operational risk must be accepted to realize a nontrivial covert link, an important planning insight for challenging environments.

\begin{figure}[b!]
    \centering
    \includegraphics[width=0.5\linewidth]{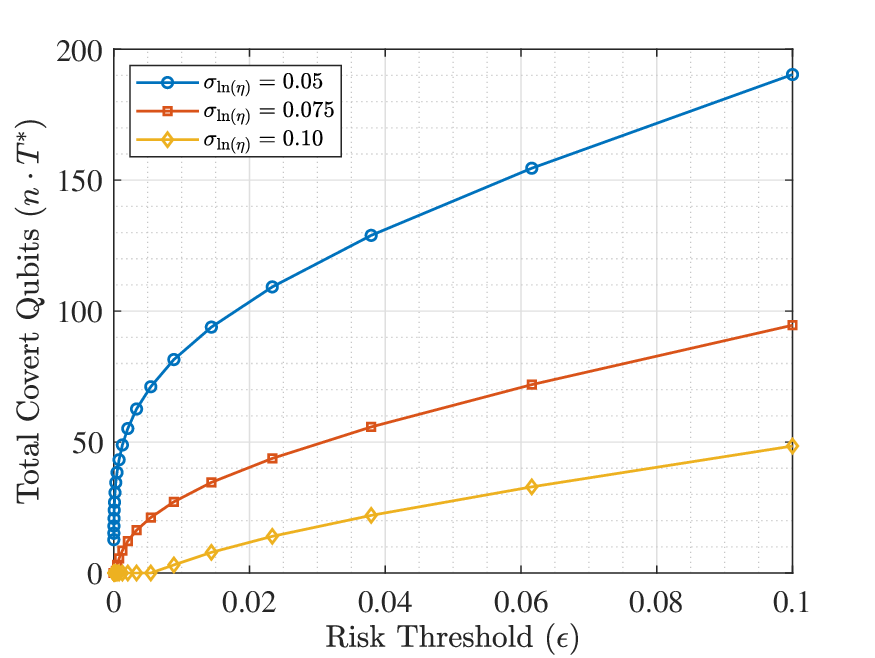}
    \caption{Effect of transmittance volatility on the symmetric-risk trade-off curve at fixed transmission length $n=10^7$, with $\epsilon_{\text{cov}}=\epsilon_{\text{rel}}=\epsilon$. The three curves correspond to $\sigma_{\ln(\eta)} \in \{0.05,\,0.075,\,0.10\}$ under the same noise model, with $\mu_{\ln(\eta)}$ adjusted across cases so that the mean transmittance remains approximately fixed. Increased volatility markedly degrades the risk--performance trade-off, and the most volatile case exhibits a low-risk feasibility boundary.}
    \label{fig:pareto_volatility}
\end{figure}

We next examine the role of channel \emph{predictability} (volatility). Figure~\ref{fig:pareto_volatility} compares symmetric-risk trade-off curves for three channels under the same noise model and approximately the same mean transmittance, but with different volatility levels, parameterized by $\sigma_{\ln(\eta)} \in \{0.05,\,0.075,\,0.10\}$. For each volatility level, $\mu_{\ln(\eta)}$ is adjusted so that the resulting transmittance distribution remains centered at approximately the same mean level. Thus, the observed differences are attributable primarily to volatility.

Performance degrades sharply as volatility increases. For example, at $\epsilon \approx 1.4\times 10^{-2}$, the most stable case ($\sigma_{\ln(\eta)}=0.05$) supports about $94$ total scheduled covert qubits, compared with about $33$ for $\sigma_{\ln(\eta)}=0.075$ and only about $8$ for $\sigma_{\ln(\eta)}=0.10$. Thus, increasing the volatility from $0.05$ to $0.10$ reduces throughput by more than an order of magnitude in this operating region. The most volatile case also shows a clear feasibility boundary at stringent risk levels: over the plotted grid, it remains effectively infeasible until the risk budget is relaxed to around $10^{-2}$, after which nonzero throughput becomes possible. These results show that performance depends not only on average channel quality but also critically on predictability.

\begin{figure}[t!]
    \centering
    \includegraphics[width=0.5\linewidth]{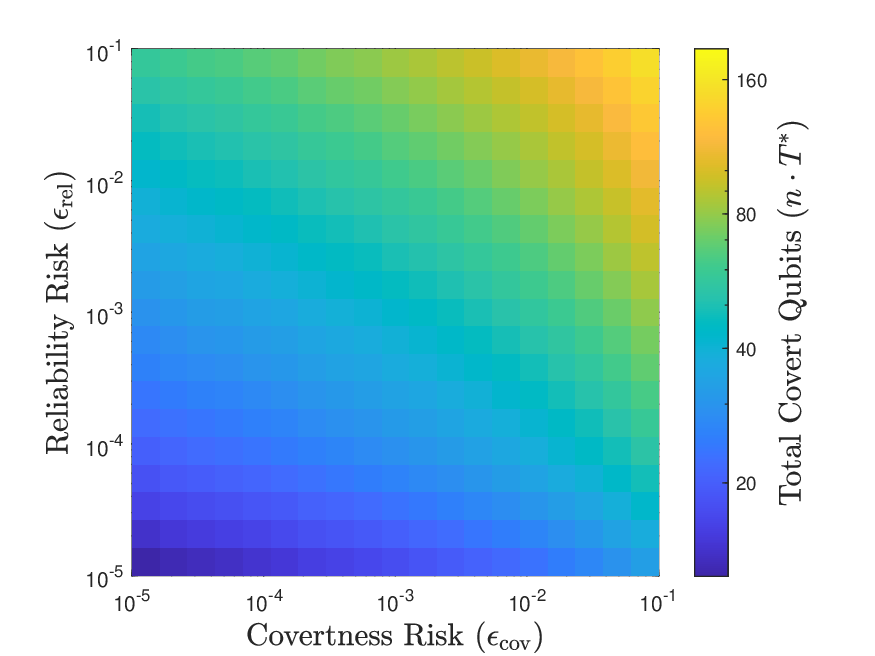}
    \caption{Surface of total scheduled covert qubits ($n\,T^*$) as a function of independent covertness and reliability risks $(\epsilon_{\text{cov}},\epsilon_{\text{rel}})$ for $n=10^7$. Over the plotted grid, the surface appears visually smooth and increases in both risk budgets; however, visual symmetry on this plot should not be interpreted as equal local sensitivity.}
    \label{fig:rate_surface}
    \vspace{-4.5mm}
\end{figure}

Figure~\ref{fig:rate_surface} separates the two risks and shows the full scheduled-payload surface for $n=10^7$. For our high-quality channel, the plotted surface is visually smooth and monotone in both $\epsilon_{\text{cov}}$ and $\epsilon_{\text{rel}}$, confirming that relaxing either risk budget improves the achievable scheduled payload. Visually, the surface does not exhibit a sharp cliff in either direction over the plotted range, which is consistent with well-behaved underlying distributions for $c_{\text{cov}}$ and $R_{\text{ach}}$ in this regime. However, this visual smoothness should not be over-interpreted as equal local sensitivity to the two risks.

For the smooth high-quality baseline regime considered here, where the cap in \(q_{\max}=\min\{1,\widetilde q(\epsilon_{\text{cov}})\}\) is inactive (that is, \(\widetilde q(\epsilon_{\text{cov}})<1\) over the plotted range), Proposition~\ref{prop:sensitivity} shows that the relevant local quantities are the partial derivatives $S_{\text{cov}}$ and $S_{\text{rel}}$, which depend on the corresponding quantile slopes of $c_{\text{cov}}$ and $R_{\text{ach}}$.
Figure~\ref{fig:sensitivities} reveals that, for the present high-quality channel, $S_{\text{cov}} \gg S_{\text{rel}}$, so covertness remains the dominant bottleneck even though the 2D surface looks qualitatively smooth in both directions. In lower-quality or more volatile channels, we expect this imbalance to change, potentially making reliability the dominant bottleneck near a sharp payload-collapse regime.

\begin{figure}
    \centering
    \includegraphics[width=0.5\linewidth]{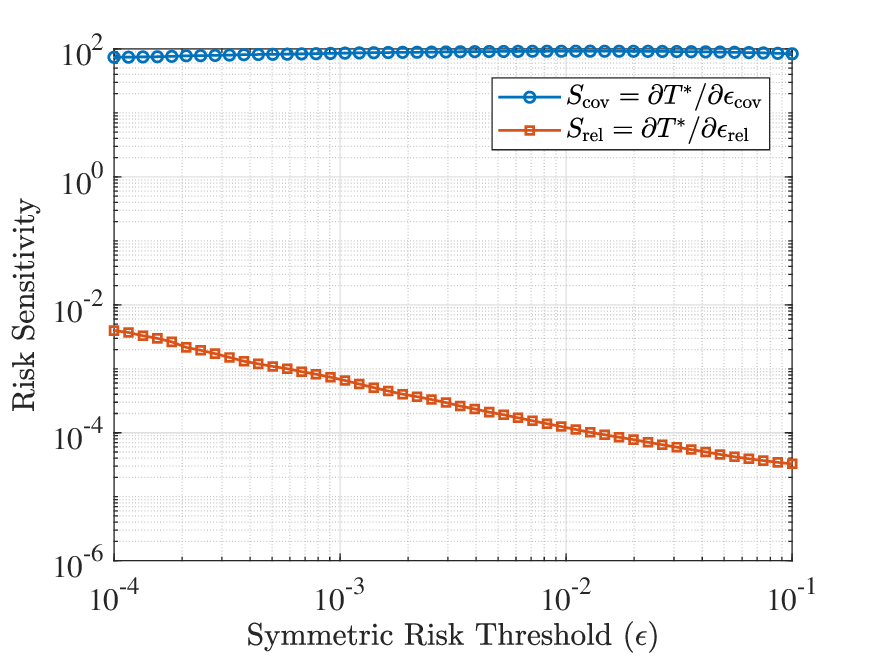}
    \caption{Sensitivity along the symmetric-budget line $\epsilon_{\text{cov}}=\epsilon_{\text{rel}}=\epsilon$ for the high-quality baseline channel ($\sigma_{\ln(\eta)}=0.05$, $\mu_{n_B}=0.005$, $\sigma_{n_B}=0.001$; $n=10^7$). The covertness sensitivity $S_{\text{cov}}$ dominates $S_{\text{rel}}$.}
    \label{fig:sensitivities}
\end{figure}

Finally, Figure~\ref{fig:sensitivities} plots the numerically estimated sensitivities $S_{\text{cov}}$ and $S_{\text{rel}}$ versus symmetric $\epsilon$. Over the plotted risk range, $S_{\text{cov}}$ exceeds $S_{\text{rel}}$ by several orders of magnitude, confirming that covertness is the active bottleneck for this high-quality baseline channel. Reliability is consistently high and thus less sensitive. By contrast, covertness remains fundamentally constrained by the Square-Root Law, making performance highly sensitive to the exact choice of $\epsilon_{\text{cov}}$ in the tail. For high-quality links, fine-tuning the covertness risk budget is therefore the most effective lever for improving performance.

\subsection{Exploratory Results for the Risk-Adjusted Model}

As a secondary exploratory study, we visualize the behavior of the risk-adjusted formulation by numerically maximizing the weighted objective over a fixed $(q,R)$ grid while sweeping the risk-aversion parameters $\lambda_{\text{cov}}$ and $\lambda_{\text{rel}}$. For these risk-adjusted results, we fix the $(q,R)$ search grid and the $(\lambda_{\text{cov}},\lambda_{\text{rel}})$ sweep ranges in advance, as detailed below. Figures~\ref{fig:decoupling_plots} and~\ref{fig:risk_adjusted_heatmaps} use the same representative channel setting, namely $\sigma_{\ln(\eta)}=0.07$, $\mu_{\ln(\eta)}=\log(0.96)-0.5(0.07)^2$, $\mu_{n_B}=0.01$, $\sigma_{n_B}=0.005$, $n=10^3$, $\delta=0.05$, and $K=10^6$, using a fixed random seed for reproducibility. In both cases, the weighted objective is maximized over a uniform $401\times 401$ grid on $(q,R)\in[0,1]\times[0,1]$, with $q$ and $R$ discretized as \texttt{linspace(0,1,401)}. For Figure~\ref{fig:decoupling_plots}, the one-dimensional sweeps use $40$ logarithmically spaced points from $10^{-2}$ to $10^{6}$, i.e., \texttt{logspace(-2,6,40)}. Specifically, Figure~\ref{fig:decoupling_vs_cov} sweeps $\lambda_{\text{cov}}$ with $\lambda_{\text{rel}}=1$ fixed, and Figure~\ref{fig:decoupling_vs_rel} sweeps $\lambda_{\text{rel}}$ with $\lambda_{\text{cov}}=10$ fixed. For Figure~\ref{fig:risk_adjusted_heatmaps}, the two-dimensional heatmaps use $\lambda_{\text{cov}},\lambda_{\text{rel}}\in\texttt{logspace(-6,6,25)}$. If multiple grid points attain the same maximum within numerical tolerance, MATLAB's \texttt{max(J(:))} selects the first maximizer in column-major order, which corresponds to the smallest $q$ and, within that $q$, the smallest $R$ among tied grid points. Figure~\ref{fig:decoupling_plots} shows representative sweeps of the numerically selected operating point, in which the dominant changes occur through sharp regime transitions and boundary-pinned solutions rather than through a uniform decoupling across the entire parameter range.

Such aggressive points lie outside the small-$q$ regime underlying the covertness approximation in Section~\ref{sec:model} and should therefore be read as illustrating the geometry of the weighted objective, not as deployment-ready covert operating points. Once $\lambda_{\text{cov}}$ crosses a narrow transition region, the numerically selected operating point collapses rapidly to a no-transmission regime, driving both $q^*$ and $R^*$ to zero.

The right panel (Figure~\ref{fig:decoupling_vs_rel}) shows a different behavior. With $\lambda_{\text{cov}}$ fixed, increasing the reliability penalty mainly suppresses the code rate $R^*$, which decreases steadily toward zero. Over this same sweep, $q^*$ remains close to zero throughout, indicating that in this representative operating regime the covertness penalty has already pushed the system into a highly conservative transmission state before the reliability penalty is varied.

\begin{figure}[t!]
    \centering
    \begin{subfigure}[b]{0.48\linewidth}
        \centering
        \includegraphics[width=\linewidth]{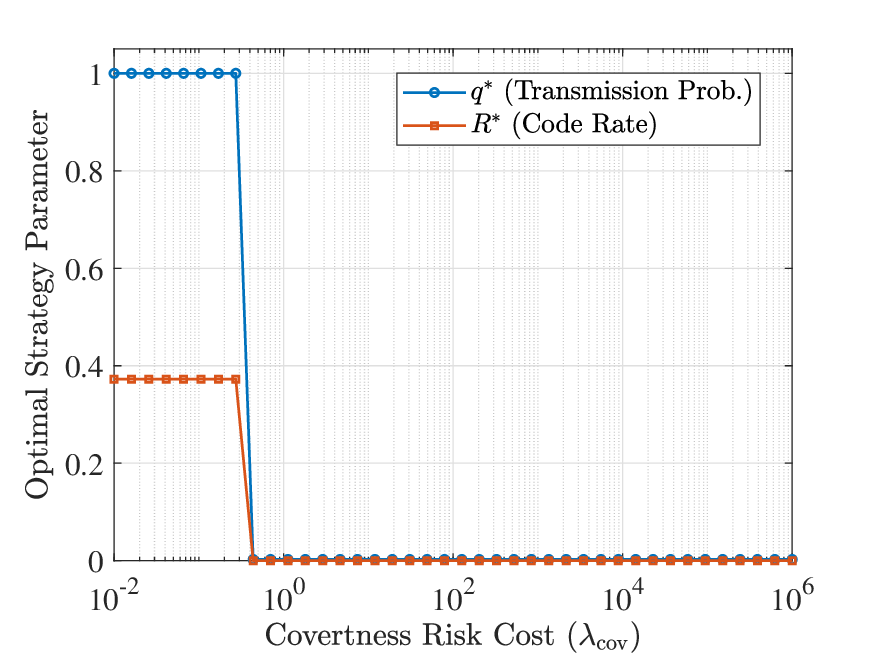}
        \caption{Sweep over $\lambda_{\text{cov}}$ (fixed $\lambda_{\text{rel}}=1$).}
        \label{fig:decoupling_vs_cov}
    \end{subfigure}
    \hfill
    \begin{subfigure}[b]{0.48\linewidth}
        \centering
        \includegraphics[width=\linewidth]{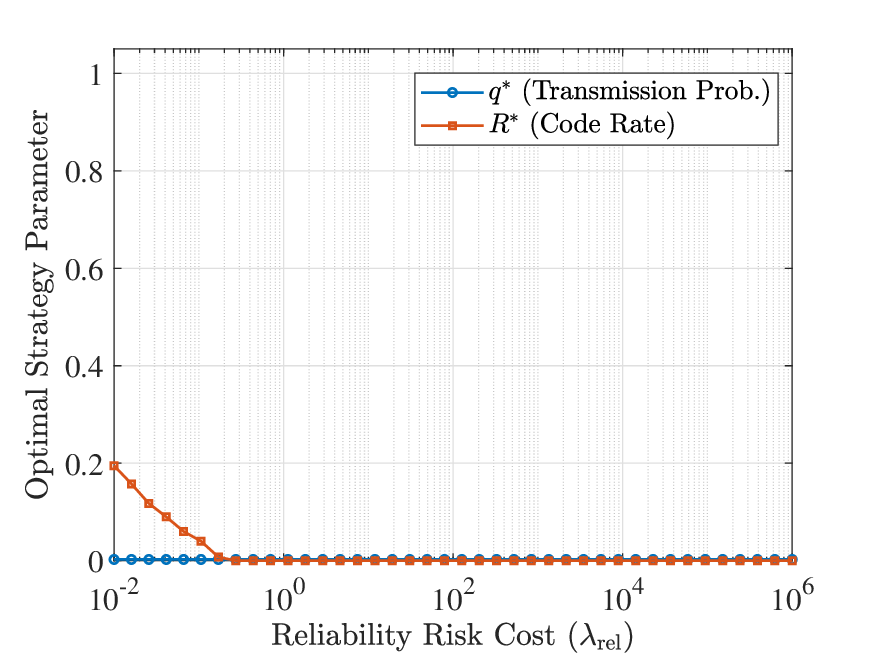}
        \caption{Sweep over $\lambda_{\text{rel}}$ (fixed $\lambda_{\text{cov}}=10$).}
        \label{fig:decoupling_vs_rel}
    \end{subfigure}
    \caption{Exploratory sweeps in the risk-adjusted model. The plotted operating points are numerical maximizers of the weighted objective on the fixed grid described in Section~\ref{sec:results}; they are useful for illustrating regime changes in the weighted formulation, but they should not be interpreted as guarantee-bearing covert operating points. (a) As the covertness penalty increases, the numerically selected operating point undergoes a sharp transition from an aggressive operating point to a no-transmission regime. (b) With the covertness penalty fixed, increasing the reliability penalty mainly suppresses the code rate, while the transmission probability remains pinned near zero in this operating regime.}
    \label{fig:decoupling_plots}
\end{figure}

The broader optimization landscape in Figure~\ref{fig:risk_adjusted_heatmaps} confirms that the numerically selected risk-adjusted operating point is governed by sharp regime changes rather than by a uniformly smooth decoupling. In Figure~\ref{fig:q_star_heatmap}, the dominant feature is a covertness-driven transition boundary: for sufficiently small $\lambda_{\text{cov}}$, the optimizer selects an aggressive transmission probability, whereas beyond a critical region it collapses rapidly to a no-transmission regime with $q^*\approx 0$. The location of this boundary is only weakly affected by $\lambda_{\text{rel}}$ until the reliability penalty becomes very large.

Figure~\ref{fig:r_star_heatmap} shows a complementary but not fully separable behavior. For small reliability penalties, the optimizer keeps a high code rate, whereas increasing $\lambda_{\text{rel}}$ progressively compresses $R^*$ toward zero. However, the transition is not purely horizontal: its detailed shape depends on $\lambda_{\text{cov}}$, especially near the regime boundary where the numerically selected operating point is already close to the no-transmission state. Thus, the risk-adjusted formulation does provide useful policy knobs, but their effects are strongly operating-point dependent and are better interpreted as regime-dependent control rather than as a globally separable law.

\begin{figure}[t!]
    \centering
    \begin{subfigure}[b]{0.48\linewidth}
        \centering
        \includegraphics[width=\linewidth]{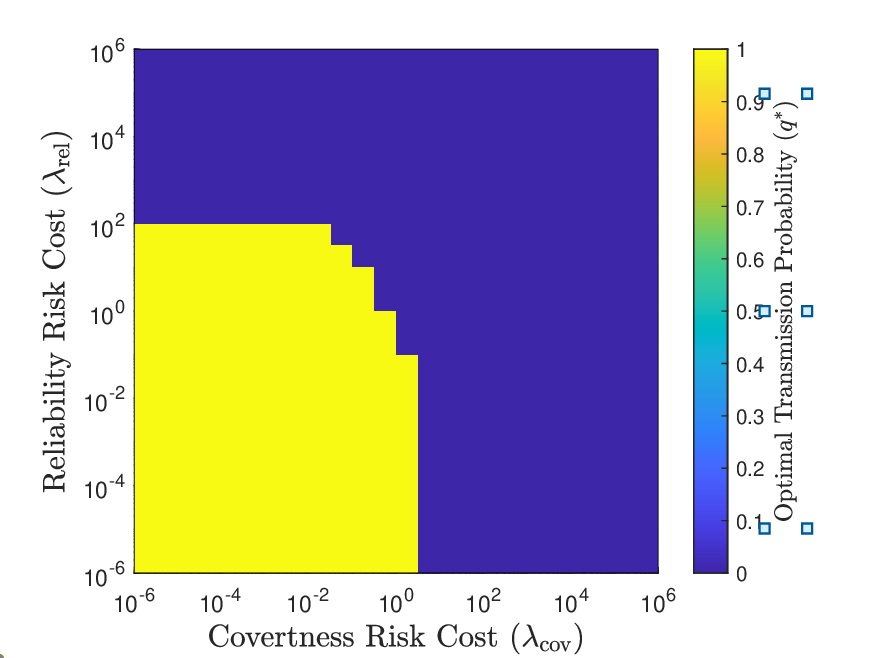}
        \caption{Optimal transmission probability $q^*$.}
        \label{fig:q_star_heatmap}
    \end{subfigure}
    \hfill
    \begin{subfigure}[b]{0.48\linewidth}
        \centering
        \includegraphics[width=\linewidth]{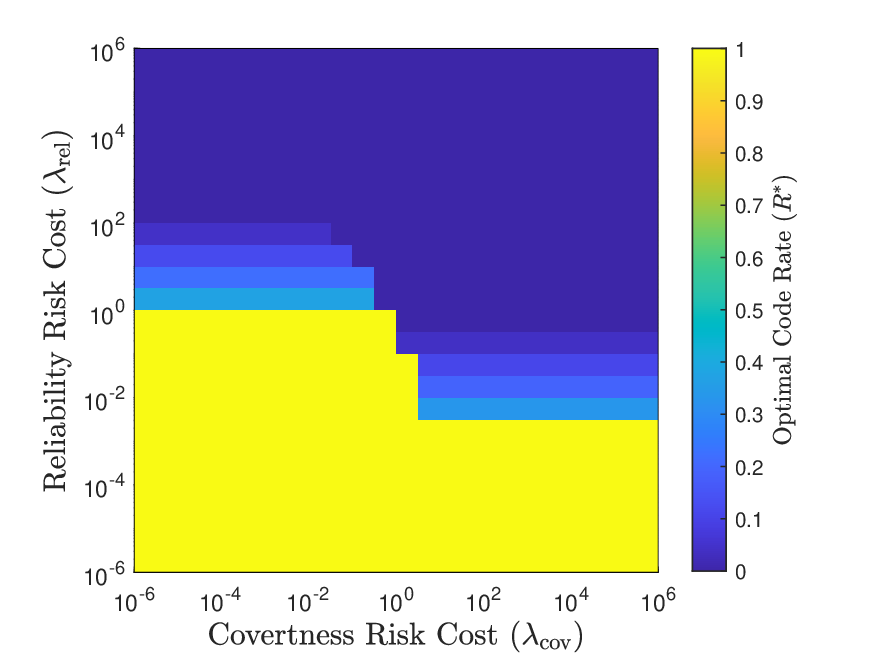}
        \caption{Optimal code rate $R^*$.}
        \label{fig:r_star_heatmap}
    \end{subfigure}
    \caption{Exploratory heatmaps of the numerically selected operating point over $(\lambda_{\text{cov}},\lambda_{\text{rel}})$, obtained by direct maximization of the weighted objective on the fixed $(q,R)$ grid described in Section~\ref{sec:results}. These plots illustrate the geometry of the weighted objective rather than guarantee-bearing covert operating points. The dominant changes occur across regime boundaries: $q^*$ undergoes a sharp covertness-driven transition between aggressive transmission and a no-transmission regime, while $R^*$ is strongly compressed as the reliability penalty increases, with the precise transition structure depending on the operating point.}
    \label{fig:risk_adjusted_heatmaps}
\end{figure}

\section{Discussion}
\label{sec:discussion}

We interpret our results in a broader systems context, extract engineering lessons, and highlight limitations that point to future research.

\subsection{Our Proposed Risk-Aware Design}
Existing analyses of CQC, such as~\cite{anderson2024square}, assume perfectly known, static channels. These models provide theoretical benchmarks but rarely reflect practice. Our framework instead models channel parameters $(\eta, \overline{n}_{B})$ as random variables, yielding a \emph{risk--performance Pareto frontier} that captures the trade-off between throughput and covertness-outage and decoding-failure probabilities. This shift from deterministic to stochastic modeling enables a paradigm of probabilistic assurance: within the adopted stochastic model and over the risk range studied here, relaxing extremely stringent risk budgets to small, explicitly quantified outage levels can unlock more-than-one-order-of-magnitude throughput gains relative to very conservative operating points. As channel uncertainty becomes small in the sense that the distributions of $\eta$ and $\overline{n}_B$ collapse to deterministic values, the induced distributions of $c_{\text{cov}}(\eta,\overline{n}_B)$ and $R_{\text{ach}}(\eta,\overline{n}_B)$ likewise collapse to deterministic values, and the present framework correspondingly reduces to the associated deterministic benchmark of prior foundational models~\cite{anderson2024square}.

\subsection{Design Insights and Engineering Principles}
Our results yield several actionable guidelines:
\begin{itemize}
    \item \textbf{Risk Budgets as System Resources.} Outage probabilities $(\epsilon_{\text{cov}}, \epsilon_{\text{rel}})$ can be treated like power or bandwidth. Allocating them according to mission priorities enables principled, data-driven operating points rather than heuristic tuning, as demonstrated by the explicit risk-reward trade-offs mapped in the Pareto frontiers of Figure~\ref{fig:pareto} and Figure~\ref{fig:pareto_mean_eta}.

    \item \textbf{Operating-Point-Dependent Control in the Risk-Adjusted Model.} The numerically selected operating point in the risk-adjusted formulation is influenced differently by the two penalties, but the effect is regime dependent rather than globally separable. In our representative operating point, increasing the covertness penalty mainly drives a sharp transition in $q^{*}$ toward a no-transmission regime, while increasing the reliability penalty mainly compresses $R^{*}$. Figures~\ref{fig:decoupling_plots} and~\ref{fig:risk_adjusted_heatmaps} therefore suggest useful tuning knobs, but they should be interpreted as boundary- and operating-point-dependent controls rather than as a universally decoupled law. Accordingly, this formulation is best viewed as a policy-exploration tool when designers wish to trade throughput against weighted failure costs, whereas the risk-constrained model remains the appropriate choice when strict outage guarantees are required.

    \item \textbf{Constraint Dominance Reveals Bottlenecks.} In the high-quality baseline channel studied here, covertness dominates performance; relaxing $\epsilon_{\text{cov}}$ delivers the greatest gains. In lower-quality or more volatile channels, this balance can shift, potentially making reliability the dominant bottleneck near a payload-collapse regime. Identifying the active constraint is a key to effective resource allocation.

    \item \textbf{Volatility Matters as Much as Averages.} Even when the mean transmittance is held approximately fixed, increased variance in transmittance can sharply compress the feasible operating region, especially at stringent risk budgets. As shown in Figure~\ref{fig:pareto_volatility}, stabilizing the channel (e.g., adaptive optics or scheduling in favorable atmospheric windows) can be as important as improving the average channel quality.

    \item \textbf{Value of Sensing.} Better channel characterization can enlarge the safe operating region by reducing uncertainty in the governing channel laws. In particular, Figure~\ref{fig:pareto_volatility} shows directly that lower transmittance volatility yields a strictly superior performance frontier, justifying investment in environmental sensing and channel-state prediction.

    \item \textbf{Small-$q$ Covertness Approximation.} The covertness constraint used in our analysis is derived from a low-transmission-probability approximation. In the risk-constrained formulation, the optimal strategies remain in the sparse-transmission regime for the parameter ranges studied here. In the risk-adjusted formulation, however, very small penalties can produce mathematically optimal points with large $q$, which should be interpreted only qualitatively.
\end{itemize}

\subsection{Model Limitations}
Our formulation relies on simplifying assumptions:
\begin{itemize}
    \item \textbf{Stationarity.} We assume fixed distributions for $\eta$ and $\overline{n}_{B}$; in reality, channels can be non-stationary. The model is most applicable within a single coherence interval.
    \item \textbf{Non-Adaptive Transmission.} The strategy $(q,R)$ is fixed. Adaptive schemes with feedback could outperform this baseline but fall outside our scope.
    \item \textbf{Passive Adversary.} Willie is assumed to be powerful but passive. Active adversaries capable of injecting or probing would require game-theoretic extensions.
    \item \textbf{Distribution Misspecification.} The framework assumes that the governing distributions of \(\eta\) and \(\overline{n}_B\) are known well enough to estimate the relevant quantiles. If these distributions are misspecified, the nominal risk budgets may be miscalibrated, so the actual covertness and reliability outage probabilities can differ from the designed values.
    \item \textbf{Quasi-Static Frame Assumption.} The stochastic model treats \((\eta,\overline{n}_B)\) as constant within each frame and random only across frames/sessions. If the channel varies substantially within a frame, then the present outage formulation should be replaced by a finer time-varying model.
\end{itemize}

\subsection{Future Research}
Several extensions are needed before this framework can support deployment-oriented design:
\begin{itemize}
    \item \textbf{Adaptive and Learning-Based Strategies.} Incorporating partial feedback or online estimation to adapt $(q_t,R_t)$ dynamically.
    \item \textbf{Correlated and Time-Varying Channels.} Using copulas or time-series models to capture dependence between $\eta$ and $\overline{n}_{B}$ and to address non-stationarity.
    \item \textbf{Robustness Against Active Wardens.} Extending to adversaries who probe or interfere, requiring robust game-theoretic designs.
    \item \textbf{Distributionally Robust Risk Design.} Extending the framework to ambiguity sets or distributionally robust formulations would allow one to hedge against errors in the assumed laws of \(\eta\) and \(\overline{n}_B\), thereby separating stochastic uncertainty from statistical model uncertainty.
\end{itemize}

\section{Conclusion}
\label{sec:conclusion}

This paper introduces a risk-aware framework for covert quantum communication (CQC) that addresses realistic channel uncertainty. By modeling transmittance and thermal noise as random variables, we formulate a primary risk-constrained design that maximizes throughput while explicitly bounding covertness and reliability outage probabilities. We also include a secondary risk-adjusted extension to illustrate how weighted failure costs reshape preferred operating points, but we do not position that extension as the guarantee-bearing design.

Our Monte Carlo-based methodology, validated against a new analytical benchmark, characterizes the risk-performance trade-off frontier under the adopted stochastic model. We demonstrate that, under the same adopted stochastic model and over the risk range studied here, relaxing extremely stringent outage budgets to marginal but quantified risk levels can increase covert throughput by more than one order of magnitude relative to very conservative operating points. Viewed through a security lens, this means that the paper provides a way to calibrate the probabilities of communication exposure and decoding failure under stochastic uncertainty, rather than relying on nominal or worst-case assumptions alone. The analysis reveals key engineering principles, notably that the risk-adjusted design is governed by operating-point-dependent regime changes: covertness and reliability penalties act as useful tuning knobs, but their effects are not globally separable. By establishing these principles of risk-aware design, this work provides a concrete analytical and simulation-based tool for tuning CQC operating points for specific missions under the adopted model. More broadly, the framework offers a foundation for future work on covert quantum communication under realistic uncertainty, including adaptive, correlated, deployment-oriented, and distributionally robust extensions.

\bibliographystyle{ACM-Reference-Format}
\bibliography{references}

\appendix
\section{Proof of Theorem~\ref{thm:optimal_throughput} (Optimal risk-constrained throughput)}
\label{Proof:thm_risk_constrained}

Consider the risk-constrained program \eqref{eq:objective_constrained}–\eqref{eq:rel_outage} with objective \(T(q,R)=qR\), probabilistic constraints
\[
\mathbb{P}\!\left[q > \tfrac{2\delta}{\sqrt{n}}\,c_{\text{cov}}(\eta,\overline{n}_B)\right] \le \epsilon_{\text{cov}},
\qquad
\mathbb{P}\!\left[R > R_{\text{ach}}(\eta,\overline{n}_B)\right] \le \epsilon_{\text{rel}},
\]
and box constraints \(0\le q\le 1\), \(0\le R\le 1\). Write
\[
c_{\text{cov}} \equiv c_{\text{cov}}(\eta,\overline{n}_B),
\qquad
R_{\text{ach}} \equiv R_{\text{ach}}(\eta,\overline{n}_B).
\]

Define, for any random variable \(X\), the strict-outage distribution function
\[
F_X^{<}(x)\triangleq \mathbb{P}[X<x],
\]
and the corresponding strict-outage quantile
\[
Q_X^{<}(\epsilon)\triangleq \sup\{x:\,F_X^{<}(x)\le \epsilon\}.
\]

\emph{Step 1: Convert the probabilistic constraints into deterministic bounds.}

For the covertness constraint,
\[
\mathbb{P}\!\left[q > \frac{2\delta}{\sqrt{n}}\,c_{\text{cov}}\right]
=
\mathbb{P}\!\left[c_{\text{cov}} < \frac{q\sqrt{n}}{2\delta}\right]
=
F_{c_{\text{cov}}}^{<}\!\left(\frac{q\sqrt{n}}{2\delta}\right).
\]
Therefore,
\[
\mathbb{P}\!\left[q > \frac{2\delta}{\sqrt{n}}\,c_{\text{cov}}\right]\le \epsilon_{\text{cov}}
\quad\Longleftrightarrow\quad
F_{c_{\text{cov}}}^{<}\!\left(\frac{q\sqrt{n}}{2\delta}\right)\le \epsilon_{\text{cov}},
\]
which is equivalent to
\[
\frac{q\sqrt{n}}{2\delta}\le Q_{c_{\text{cov}}}^{<}(\epsilon_{\text{cov}}).
\]
Hence,
\[
q \le \frac{2\delta}{\sqrt{n}}\,Q_{c_{\text{cov}}}^{<}(\epsilon_{\text{cov}}).
\]
Combining with the box constraint \(q\le 1\) gives
\[
q \le \min\!\left\{1,\;\frac{2\delta}{\sqrt{n}}\,Q_{c_{\text{cov}}}^{<}(\epsilon_{\text{cov}})\right\}
\eqqcolon q_{\max}.
\]

For the reliability constraint,
\[
\mathbb{P}[R>R_{\text{ach}}]
=
\mathbb{P}[R_{\text{ach}}<R]
=
F_{R_{\text{ach}}}^{<}(R).
\]
Therefore,
\[
\mathbb{P}[R>R_{\text{ach}}]\le \epsilon_{\text{rel}}
\quad\Longleftrightarrow\quad
F_{R_{\text{ach}}}^{<}(R)\le \epsilon_{\text{rel}},
\]
which is equivalent to
\[
R \le Q_{R_{\text{ach}}}^{<}(\epsilon_{\text{rel}})
\eqqcolon R_{\max}.
\]

\emph{Step 2: Maximize over the induced feasible rectangle.}

The feasible set induced by the probabilistic constraints and the box constraints is
\[
[0,q_{\max}]\times[0,R_{\max}].
\]
If \(R_{\max}=0\), the rectangle collapses to the zero-rate edge and the optimal throughput is \(T^*=0\), so no nontrivial covert link is feasible under the specified budgets. Otherwise, since \(T(q,R)=qR\) is nondecreasing in both arguments for \(q\ge 0\) and \(R\ge 0\), its maximum over this rectangle is attained at the upper-right corner \((q_{\max},R_{\max})\). Therefore,
\[
(q^*,R^*)=(q_{\max},R_{\max}),
\qquad
T^*=q_{\max}R_{\max}.
\]

This proves Theorem~\ref{thm:optimal_throughput}. \hfill\(\square\)

\medskip

To obtain Corollary~\ref{cor:continuous_quantile_reduction}, note that when the relevant outage thresholds lie at continuity points of the corresponding CDFs, we have \(F_X^{<}(x)=F_X(x)\) at those points, and the strict-outage quantile \(Q_X^{<}(\epsilon)\) reduces to the usual left-continuous quantile \(F_X^{-1}(\epsilon)\). Substituting this into the theorem yields \eqref{eq:q_max} and \eqref{eq:R_max}. \hfill\(\square\)

\section{Proof of Theorem~\ref{thm:foc_risk_adjusted} (First-order conditions for differentiable interior optima)}
\label{Proof:thm_foc_risk_adjusted}
Under the continuity assumptions required here, the strict-outage probabilities
\[
\mathbb{P}[c_{\text{cov}}<x]
\quad\text{and}\quad
\mathbb{P}[R_{\text{ach}}<r]
\]
coincide with the ordinary CDF evaluations
\[
F_{c_{\text{cov}}}(x)
\quad\text{and}\quad
F_{R_{\text{ach}}}(r),
\]
respectively.

Define
\[
J(q,R)\;\triangleq\; qR
-\lambda_{\text{cov}}\,\mathbb{P}\!\left[q > \tfrac{2\delta}{\sqrt{n}}\,c_{\text{cov}}\right]
-\lambda_{\text{rel}}\,\mathbb{P}\!\left[R > R_{\text{ach}}\right],
\]
over the box $\mathcal{D}=\{(q,R): 0\le q\le 1,\; 0\le R\le 1\}$, where $\lambda_{\text{cov}},\lambda_{\text{rel}}\ge 0$.
Assuming continuity and differentiability at the evaluation points, we can write
\[
J(q,R) \;=\; qR
-\lambda_{\text{cov}}\,F_{c_{\text{cov}}}\!\left(\tfrac{q\sqrt{n}}{2\delta}\right)
-\lambda_{\text{rel}}\,F_{R_{\text{ach}}}(R).
\]
Assume that $F_{c_{\text{cov}}}$ and $F_{R_{\text{ach}}}$ are differentiable at the evaluation points (the non-differentiable case can be handled via subgradients). If $(q^*,R^*)$ is an \emph{interior} maximizer with $0<q^*<1$ and $0<R^*<1$, the first-order optimality conditions reduce to stationarity:
\begin{align*}
\frac{\partial J}{\partial q}(q^*,R^*)
&= R^* - \lambda_{\text{cov}}\,
\frac{\sqrt{n}}{2\delta}\,
f_{c_{\text{cov}}}\!\left(\tfrac{q^*\sqrt{n}}{2\delta}\right)=0,\\
\frac{\partial J}{\partial R}(q^*,R^*)
&= q^* - \lambda_{\text{rel}}\,
f_{R_{\text{ach}}}(R^*)=0,
\end{align*}
where $f_{c_{\text{cov}}}$ and $f_{R_{\text{ach}}}$ are the PDFs (derivatives of the CDFs) at the corresponding points. Rearranging yields
\[
R^* \;=\; \lambda_{\text{cov}}\,
\frac{\sqrt{n}}{2\delta}\,
f_{c_{\text{cov}}}\!\left(\tfrac{q^* \sqrt{n}}{2\delta}\right),
\qquad
q^* \;=\; \lambda_{\text{rel}}\,
f_{R_{\text{ach}}}(R^*),
\]
which are the stated first-order conditions \eqref{eq:foc_q}–\eqref{eq:foc_R}.

If no interior stationary point exists (e.g., due to non-differentiability or the equations having no solution in $(0,1)\times(0,1)$), the maximizer lies on the boundary of $\mathcal{D}$, where at least one of $q$ or $R$ is at a bound, including the possibility of the trivial zero-rate point $R=0$. In that case, the problem reduces to one-dimensional maximization along the active boundary and can be checked directly. In the numerical results, we therefore evaluate the risk-adjusted objective directly on a dense grid over $(q,R)\in[0,1]\times[0,1]$ to avoid artifacts caused by non-smooth empirical CDFs and boundary-pinned solutions. The first-order conditions, \eqref{eq:foc_q} and \eqref{eq:foc_R}, remain useful for structural interpretation whenever an interior differentiable optimum exists.

\section{Proof of Lemma 1}
\label{app:proof_lemma_1}

We provide the detailed derivation for the closed-form quantile functions presented in Lemma~\ref{lem:exponential_case}.

\subsection*{Derivation of the Covertness Quantile}
Let $X = \overline{n}_B \sim \mathrm{Exp}(\lambda)$ and define
\[
c_{\text{cov}}(\eta_0,X) = k \sqrt{X + \eta_0 X^2},
\qquad
k = \frac{\sqrt{2\eta_0}}{1-\eta_0}.
\]
This is a monotonically increasing function of $X$ for $X \ge 0$. Let $F_{c_{\text{cov}}}(c)$ denote the CDF of the induced random variable $c_{\text{cov}}(\eta_0,X)$, i.e.,
\[
F_{c_{\text{cov}}}(c) = \mathbb{P}\!\big(c_{\text{cov}}(\eta_0,X)\le c\big).
\]
\begin{align*}
\mathbb{P}\!\big(k \sqrt{X + \eta_0 X^2} \le c\big)
&= \mathbb{P}\!\left(X + \eta_0 X^2 \le \frac{c^2}{k^2}\right) \\
&= \mathbb{P}\!\left(\eta_0 X^2 + X - \frac{c^2}{k^2} \le 0\right).
\end{align*}
The positive root of the quadratic $\eta_0 x^2 + x - (c/k)^2 = 0$ is
\[
x_{\text{root}} = \frac{-1 + \sqrt{1 + 4\eta_0(c/k)^2}}{2\eta_0},
\]
so the inequality holds for $0 \le X \le x_{\text{root}}$. Therefore,
\[
F_{c_{\text{cov}}}(c)
=
\mathbb{P}(X \le x_{\text{root}})
=
1 - e^{-\lambda x_{\text{root}}}.
\]
To find the quantile $F^{-1}_{c_{\text{cov}}}(\epsilon)$, we set
\[
F_{c_{\text{cov}}}(c) = \epsilon
\]
and solve for $c$.
\begin{align*}
    \epsilon &= 1 - \exp\left(-\lambda \frac{-1 + \sqrt{1 + 4\eta_0(c/k)^2}}{2\eta_0}\right) \\
    \ln(1-\epsilon) &= -\frac{\lambda}{2\eta_0}\left(-1 + \sqrt{1 + 4\eta_0(c/k)^2}\right) \\
    1 - \frac{2\eta_0}{\lambda}\ln(1-\epsilon) &= \sqrt{1 + 4\eta_0(c/k)^2}
\end{align*}
Let $Z = 1 - \frac{2\eta_0}{\lambda}\ln(1-\epsilon)$. Squaring both sides and solving for $c$ yields:
\[
c^2 = \frac{k^2(Z^2 - 1)}{4\eta_0} \implies c = k \sqrt{\frac{Z^2-1}{4\eta_0}}.
\]
This is the closed-form expression for $F^{-1}_{c_{\text{cov}}}(\epsilon_{\text{cov}})$. Therefore, after enforcing the box constraint $q\le 1$, the optimal transmission probability is
\[
q_{\max}
=
\min\!\left\{1,\; \frac{2\delta}{\sqrt{n}}\,k\sqrt{\frac{Z^2-1}{4\eta_0}}\right\}.
\]

\paragraph{Derivation of the Reliability Quantile.}
Let \(R_{\text{ach}} = R_{\text{ach}}(\eta_0,X)\), where \(X=\overline{n}_B\sim \mathrm{Exp}(\lambda)\). Since \(R_{\text{ach}}(\eta_0,X)\) is monotonically decreasing in \(X\), its \(\epsilon_{\text{rel}}\)-quantile is obtained by evaluating \(R_{\text{ach}}\) at the \((1-\epsilon_{\text{rel}})\)-quantile of the noise distribution. Specifically, if
\[
x_{\epsilon} \;=\; F_X^{-1}(1-\epsilon_{\text{rel}})
\;=\; -\frac{1}{\lambda}\ln(\epsilon_{\text{rel}}),
\]
then
\[
F^{-1}_{R_{\text{ach}}}(\epsilon_{\text{rel}})
\;=\;
R_{\text{ach}}(\eta_0,x_{\epsilon})
\;=\;
\Big[\,1 - H\!\big(\vec p(\eta_0,-\tfrac{1}{\lambda}\ln(\epsilon_{\text{rel}}))\big)\,\Big]^+.
\]
Hence,
\[
R_{\max}
\;=\;
\Big[\,1 - H\!\big(\vec p(\eta_0,-\tfrac{1}{\lambda}\ln(\epsilon_{\text{rel}}))\big)\,\Big]^+.
\]
\end{document}